\documentclass[dvipsnames,moor,sglanonrev,thm-restate]{article}
\usepackage{booktabs} % For formal tables
\usepackage[ruled]{algorithm2e} % For algorithms

\usepackage[font=footnotesize]{subcaption}
\usepackage{amsmath,amsfonts,amssymb,amsthm}
\usepackage{tikz}
\usetikzlibrary{arrows,automata} 
\usepackage{color}
\usepackage{graphicx}
\usepackage{mathrsfs}
\usepackage{url}
\usepackage{comment}
\usepackage{lipsum}
\usepackage{wasysym}%
\usepackage{cleveref}
\usepackage{thm-restate}
\usepackage[margin=1in]{geometry}

\usepackage{tikz}
\usetikzlibrary{arrows,automata} 
\usepackage{tikz-3dplot}
\usetikzlibrary{decorations.pathreplacing,calligraphy}
\usetikzlibrary{decorations.pathmorphing}

\tikzset{%
	pics/ellip/.style args={#1}{code={%          
			\fill[rotate=60,#1] (0,0) ellipse (0.9pt and 1.1pt);
	}},
	pics/ellip/.default=CarnationPink, % default values
    > = stealth, % arrow head style
	shorten > = 1pt, % don't touch arrow head to node
	auto,
	node distance = 3cm, % distance between nodes
	every edge/.append style = {thick}, % line style
}
\tikzstyle{axis}=[thin]
\tikzstyle{state}=
[circle,draw=black,thick,fill=black,minimum size=1.5mm,inner sep=0pt]

% Natbib setup for author-year style
\usepackage[sort]{natbib}
\bibpunct[, ]{(}{)}{,}{a}{}{,}%

\newtheorem{example}{Example}
\newtheorem{theorem}{Theorem}
\newtheorem{lemma}{Lemma}
\newtheorem{proposition}{Proposition}

\newtheorem{definition}{Definition}

\newcommand{\R}{\mathbb{R}}

\newcommand{\calS}{\mathcal{S}}

\newcommand{\calX}{\mathcal{X}}

\newcommand{\cost}{C}
\newcommand{\prior}{\mu^*}
\newcommand{\support}{A}

\newcommand{\NN}{\mathbb{N}}

\newcommand{\Matri}{\changed{M}}
\newcommand{\matri}{\changed{m}}

\newcommand{\classP}{\textsf{P}}
\newcommand{\classNP}{\textsf{NP}}

\renewcommand{\phi}{\varphi}
\renewcommand{\epsilon}{\varepsilon}
\renewcommand{\setminus}{\, \backslash \,}

\newcommand{\D}{\displaystyle}

\allowdisplaybreaks

\setcounter{secnumdepth}{3}

%\title{Public Signals in Affine Network Congestion Games with Uncertain Free-Flow Times}
\title{Public Signals in Network Congestion Games}

\author{Svenja M. Griesbach\thanks{Centro de Modelamiento Matemático (CNRS IRL2807), Universidad de Chile, Chile. \texttt{sgriesbach@cmm.uchile.cl}} \and Martin Hoefer\thanks{Department of Computer Science, RWTH Aachen University, Germany. \texttt{mhoefer@cs.rwth-aachen.de}} \and Max Klimm\thanks{Institute of Mathematics, Technische Universit\"at Berlin, Germany. \texttt{klimm@math.tu-berlin.de}} \and Tim Koglin\thanks{Department of Computer Science, RWTH Aachen University, Germany. \texttt{koglin@algo.rwth-aachen.de}}}%

\date{}

\newcommand{\changed}[1]{{\color{black} #1}}

\begin{document}
    
\maketitle
		
\begin{abstract}
	Travel times in road networks are subject to stochastic uncertainty resulting from various parameters. A benevolent mobility service provider observing the actual travel times from data may use their informational advantage to steer the traffic equilibrium in a favorable direction by a task known as signaling or Bayesian persuasion. 
	Previous work has shown that the underlying signaling problem can be NP-hard to approximate within any non-trivial bounds, even for static Wardrop flows with affine cost functions with stochastic offsets. In contrast, we show that in this case, the signaling problem is easy for many networks. First, we tightly characterize the class of single-commodity networks, in which full information revelation is always an optimal signaling strategy. Second, we construct a reduction from optimal signaling to computing an optimal collection of support vectors for the Wardrop equilibrium. For two states, this allows us to efficiently compute an optimal signaling scheme whenever the number of different supports is bounded by a polynomial in the input size. Using a cell decomposition technique, we extend the approach to a polynomial-time algorithm for multi-commodity parallel edge networks with a constant number of commodities, even when we have a constant number of different states of nature.
\end{abstract}

\clearpage

	\section{Introduction}
		
	The selfish behavior of traffic participants degrades the performance of traffic networks. The theory of congestion games - within its long research history at the intersection of computer science, economics, and operations research - provides a rich set of tools to study this effect qualitatively and quantitatively. 
	
	Most notably, the substantial set of results includes characterizations and quantitative bounds on the inefficiency of equilibria. This inefficiency leads to higher overall travel times and, as a consequence, to higher greenhouse gas emissions, and a general loss of social welfare, compared to an optimal allocation of traffic flow in the network. Moreover, it can result in paradoxical effects of equilibrium behavior when changing the travel times or the demand in the network. 
    
	As a remedy for the inefficiency of equilibrium flows, various measures have been discussed in the literature, most prominently road tolls \citep[e.g.][]{Colini-Baldeschi18,FleischerJM04,KarakostasK04,HoeferOS08,HarksKKM15,PaccagnanCFM21,ColeDR06,JelinekKS14} and - to a lesser extent - network design techniques~\citep[e.g.][]{GairingHK17,Roughgarden06}. While in principle, some of these measures seem very attractive, the implementation of road toll stations, satellite-based methods to implement road pricing schemes, or the construction of new road infrastructure requires substantial and long-term investments and puts an additional burden on the environment. Consequently, it would be much more beneficial to be able to improve the traffic situation without the need for large investments.
	
	In this paper, we consider a largely untapped potential of network improvement that is rooted in the inherent uncertainty of travel times. In practice, the travel times are not deterministic and are subject to stochastic uncertainty (see, e.g., \cite{LianeasNM19} and further literature in \Cref{ssec:related-work}) due to various influences such as weather conditions, occurrences of road works, or traffic accidents. In these scenarios, mobility services like TomTom, Waze, or Google Maps have an informational advantage over a single network agent as they learn traffic conditions from data. 
	Indeed, with the omnipresence of navigation devices, mobility services are becoming increasingly aware of their significant impact on emerging traffic patterns~\citep{SpectrumIEEE19}. A benevolent mobility service may use its informational advantage to steer the traffic equilibrium in a favorable direction. The resulting optimization problem for the mobility service can be cast as a task commonly referred to as \emph{signaling} or \emph{Bayesian persuasion}. 
	
	In the Bayesian persuasion problem considered in this work, there is a set of different \emph{states of nature}, along with a prior distribution over these states. The state determines the exact cost functions for every connection in the network. Further, in this setting, the mobility service acts as a \emph{principal} who learns the realization of the state of nature by monitoring and aggregating available real-time data on traffic flows, weather forecasts, road work information, etc.\ In contrast, individual \emph{agents} only know the prior, but they rely on the mobility service for detailed information about the state of the traffic network. To exploit the informational advantage, the principal commits upfront to a \emph{signaling scheme}, i.e., a distribution over (abstract) signals for each possible state of nature. Examples of such signals are, e.g., route recommendations, traffic reports, weather updates, etc.\ Upon learning the realization of the state, the principal sends a (public) signal chosen according to the scheme to all agents. After having observed the signal, each agent performs a Bayesian update to adjust its belief about the realized state and the corresponding costs in the network and then chooses a route based on this updated belief. The goal of the principal is to choose the signaling scheme such that the expected cost of the emerging traffic equilibrium is minimized. For a more formal description of the model, we refer to Section~\ref{sec:preliminaries}.

    Every signal is an abstract message. As such, the sender can decide what information to communicate to the agents.
    For example, the signal may contain the corresponding equilibrium flow.
    In addition to sending the equilibrium flow as a \emph{public} signal to every agent, the sender can also send \emph{private} route recommendations to each agent that, combined, yield the equilibrium flow.
    Since under the conditions we impose in this paper, the Wardrop equilibrium is unique, the agents minimize their expected travel times by following these private route recommendations.
    This lifts the burden from the agents to compute an equilibrium upon receiving a signal.
    Interestingly, a combination of public signals (e.g., information about traffic delays) and private route recommendations  is also what is commonly issued by traffic service providers.
	
	This signaling problem is intrinsically difficult. In a seminal work, \cite{BhaskarCKS16} show that the problem of computing the optimal scheme cannot be approximated in polynomial time by a factor of $4/3 - \epsilon$ for every $\epsilon > 0$ unless \classP\ = \classNP. The hardness result applies to single-commodity networks (i.e., all agents travel from the same source to the same target) where edge costs are \emph{affine with state-based offsets} (i.e., the cost of an edge $e$ in state $\theta$ is of the form $c_e^{\theta}(x) = a_e \, x + b_e^{\theta}$ with $a_e>0$ and $b_e^\theta\geq 0$).
	A $4/3$-approximation, however, is a trivial consequence of the well-known bound of $4/3$ on the price of anarchy for affine costs~\citep{RoughgardenT02}. 
	
	The reduction of \citeauthor{BhaskarCKS16} yields strong bounds on the performance of polynomial algorithms on \emph{worst-case instances}. Yet, it leaves open the question of whether positive results can be obtained under stronger assumptions on the network structure and the number of states. From a practical point of view, it is of high interest to design algorithms that may have super-polynomial runtime on worst-case instances but can solve realistic instances within a reasonable time limit. These are the questions studied in this paper.

\subsection{Our Contribution}
We provide a thorough investigation of the power of information in congested networks.
More concretely, we study networks where the edge costs are affine with state-based offsets of the form $c_e^{\theta}(x) = a_e \, x + b_e^{\theta}$ with $a_e>0$ and $b_e^\theta\geq 0$, as in \cite{BhaskarCKS16}.
	As a prerequisite of our further results, we first show that the cost of the induced Wardrop equilibrium is a piecewise linear function of the agents' belief (\Cref{lem:piecewise-linear}). Specifically, we show that the function is affine as long as the support of the induced Wardrop equilibrium does not change. For the proof of this result, we use properties of the linear systems that characterize Wardrop equilibria for given supports that can be expressed via weighted graph Laplacians of appropriate subgraphs.
	
	We then study under which circumstances \emph{full information revelation} is an optimal signaling scheme. In this context, full information revelation means that the principal sends a distinct signal for each state so that the agents always know the realized state.
	Our first result (\Cref{thrm:sepa-full-opt-iff}) is a complete characterization of the single-commodity networks for which full information revelation is always an optimal signaling scheme.
	Specifically, we show that full information revelation is always optimal if and only if the network is series-parallel.
	Building on \Cref{lem:piecewise-linear}, we show that for a series-parallel network, the cost of the induced Wardrop equilibrium is not only piecewise linear, but also  concave, implying that full information revelation is an optimal signaling scheme.
	This characterization is tight in the sense that for every non-series-parallel network, there are cost functions such that full information revelation is not optimal.
	
	In \Cref{sec:LPs}, we tackle the computation of optimal signaling schemes in general multi-commodity networks. 
	To this end, we show that when a set of support vectors is given one can compute the optimal signaling scheme that is restricted to inducing Wardrop equilibria using only those supports (\Cref{thm:bigLP}).
	Hence, the hardness of finding good signaling schemes in~\cite{BhaskarCKS16} is caused by the difficulty of good \emph{support selection} for the emerging Wardrop equilibria.
%	We further present a support enumeration approach to obtain equilibrium supports for all $\mu \in \Delta(\Theta)$.
	For games with two states, we also show that all supports of Wardrop equilibria can be enumerated in output-polynomial time (\Cref{prop:polyTwo}). Unfortunately, we show that even for games with two states, there can be exponentially many supports that arise in a Wardrop equilibrium for some $\mu \in \Delta(\Theta)$ (\Cref{thm:expoential-supports}).
    For parallel-edge networks, we exploit the ordinal structure of support sets and apply a cell decomposition technique to bound (and compute) the set of supports for all $\mu \in \Delta(\Theta)$. Our approach results in a polynomial-time algorithm to enumerate all supports whenever we have a constant number of commodities and a constant number of states. Hence, under these conditions, we can compute an optimal signaling scheme in polynomial time (\Cref{thm:parallel-poly-time}).

    While the number of supports can be exponential in general networks, we provide in Section~\ref{sec:compStudy} experimental evidence from actual road network data retrieved from an online library \citep{CSData22} that suggests that the number of different supports of the Wardrop equilibrium is indeed \emph{relatively small} (i.e., smaller than $15$ in all tested instances). Hence, our approach to compute optimal signals is viable for real-world networks. For the six network instances tested, we also provide further insights on the form of the resulting Wardrop equilibria as well as the benefit of signaling to reduce social cost. It turns out that full information is either the optimal or a near-optimal signaling scheme in all tested instances. %This property holds even more generally for beyond affine costs with state-based offsets, as shown by our experiments in Section~\ref{ssec:compStudy-AFS-MSO}. 
    Hence, our results offer an appealing conclusion for mobility services: Full information revelation is simple as well as highly desirable from the perspective of the individual agent, and it also represents a \emph{near-optimal} signaling strategy in many real-world networks.

We believe that affine costs with stochastic offsets are a reasonable working assumption that enables us to understand the key aspects of relevant applications. Our main theoretical insights and analytical tools \emph{do not immediately extend} beyond affine costs with state-based offsets.
In particular, our main results build on the structural property, that the expected cost of the equilibrium as a function of the posterior belief is \emph{piecewise linear}.
As we show in \Cref{sec:discussion}, this condition does not continue to hold, even in a single-commodity network with two parallel links and affine costs with state-based slope (\Cref{ex:counter-stochastic-slope}) or quadratic cost with state-based offset (\Cref{ex:counter-monomials}).
    Piecewise linearity of the expected cost enables the construction of our linear programs to compute an optimal scheme for a bounded number of supports. As a consequence, this implies that an optimal scheme always involves probabilities that are rational numbers of bounded precision. Indeed, there are simple examples with affine costs and state-based slopes where the optimal signaling scheme involves irrational signaling probabilities (Example~\ref{ex:counter-stochastic-slope}).

    \subsection{Related Work}
    \label{ssec:related-work}
    
    \paragraph{General Bayesian Persuasion.}
	The idea of using informational advantages to enforce favorable outcomes in games goes back to the early work of \cite{AumannM66}; the area started to receive substantial interest after the seminal work of \cite{KamenicaG11}. Characterizing and computing good signaling schemes is a very active area of research \citep{DughmiX21, BadanidiyuruBX18, EmekFGLT14}, involving diverse aspects such as, e.g., limited signals~\citep{DughmiKQ16,GradwohlHHS21}, multiple receivers and private signals~\citep{Rubinstein17,ArieliB19,Xu20}, online optimization~\citep{HahnHS20IJCAI,HahnHS20}, and learning~\citep{CastiglioniCMG20,CastiglioniMCG21,ZuIX21}. 
	
	\paragraph{Congestion Games and Stochasticity.}
	It has been recognized that in practice travel times are non-deterministic \citep{LiuBRM02,MirchandaniS87}.
	This motivates the study of non-atomic congestion games with stochastic travel times for which equilibrium characterizations for risk-averse agents are obtained \citep{CominettiT16,OrdonezSM10,NikolovaM14,Nie11}.
	Bounds on the efficiency loss due to risk aversion are studied by \citet{LianeasNM19}.
	Further related is the stochastic agent equilibrium concept introduced by \citet{Dial71} where travel times are deterministic but are perceived stochastically by the agents.
	
	Another source of uncertainty in non-atomic congestion games is the demand.
	The dependency of the price of anarchy on the demand is studied empirically by \citet{YounGJ08,OHareCW16} and analytically by \citet{Colini-Baldeschi19,Colini-Baldeschi20,CominettiDS24,WuMCX21}.
	\citet{WangDC14} study the price anarchy for stochastic demands. 
	More generally, the sensitivity of Wardrop equilibria has been analyzed by \citet{EnglertFO10,KlimmW22,Patriksson04}.
	\citet{WuMRX23} quantify the efficiency loss due to selfish behavior in atomic congestion games with growing demand.
	Further related are atomic congestion games where agents participate with a certain probability \citep{GairingMT08,CominettiSSSM19,AngelidakisFL13,AshlagiMT06,MeirTBK12}. These games are known to admit multiple Nash equilibria, so that one faces the issue of equilibrium selection for the design of the optimal public signaling scheme. However, \citet{CominettiSSM23} show that Nash equilibria in a sequence of weighted congestion games with $n$ players converge to a Wardrop equilibrium of a limiting non-atomic game for $n \to \infty$ under mild and natural conditions.
		
	\paragraph{Bayesian Persuasion in Congestion Games.}
	\cite{BhaskarCKS16} consider the same model as this paper (i.e., \emph{non-atomic} congestion games with affine cost functions and state-based offsets) and show that it is $\mathsf{NP}$-hard to compute a signaling policy that approximates the total expected travel time better than a factor of $4/3 - \epsilon$ for any $\epsilon > 0$.
	\cite{Das17} examine two concrete networks with affine costs, one with state-based offsets and one with state-based slope, and compute the optimal signaling scheme.
    \citet{NachbarX21} study among others general state-based cost functions and the connection between signaling schemes and the price of anarchy.
    \cite{Massicot19} provide a complete characterization of the optimal public signaling policy with affine costs for the single-commodity case on a network consisting of two parallel edges. 
    \cite{Vasserman15} consider parallel edges with affine costs. The cost functions are permuted randomly and bound the performance of private signals.
    \cite{Zhu22} consider a model where only a fixed fraction of the agents receives private or public signals.
    Their main focus is on a parallel edges network with two edges.
    For polynomial latency functions, they derive a connection to the generalized problem of moments and associated semidefinite programming techniques to approximate optimal signaling policies. They establish a formulation of the optimization problem for private signaling policies.
    In contrast, our focus is on public signals and affine costs which allows us to obtain exact algorithms for the signaling problem.
    \cite{GriesbachHKK24} study the impact of information design in networks with affine cost functions and uncertainty in the demand.

    \cite{CastiglioniCM021} study a related model for \emph{atomic} congestion games, where the players have to commit to following the mediator's recommendation before receiving the signal. Their positive results are contrasted by recent work of \cite{ZhouNX22} for singleton games in the standard signaling model.   
    \citet{AcemogluMMO18} consider a setting where players have only partial knowledge concerning the available edges on the graph and give a complete characterization of the graph classes for which players cannot obtain higher cost by gaining additional information.
    \citet{WuAO21} study the Bayesian Wardrop equilibria arising when commodities receive multiple signals from different information systems.
    
    \citet{Arnott91} explore the provision of information in a \emph{dynamic} model where players have preferences over arrival times.
    \citet{GriesbachKWZ23} consider a dynamic flow model and use Bayesian persuasion to optimize throughput and makespan.

An upshot of the work of \citet{AcemogluMMO18} is the connection between information design for congested networks and the Braess' paradox \citep{Braess68}. In the latter, the addition of a zero-cost edge increases the travel time of all traffic participants. Further results in this direction include \cite{Milchtaich06}  who characterized the networks where this paradox can occur, \cite{FujishigeGHPZ17} who showed that the paradox does not appear when the strategies are a matroid, \cite{Roughgarden06} who studied a related computational problem, and \cite{ValiantR10} who computed the probability of the occurrence of the paradox in random networks.

Further, as noted by \cite{NachbarX21}, Bayesian persuasion is intrinsically connected to the price of anarchy since information design cannot decrease the total travel time by a factor that is larger than the price of anarchy. Tight bounds on the price of anarchy for various sets of cost functions have been established both for non-atomic congestion games \citep{CorreaSM08,Roughgarden03,RoughgardenT02,Roughgarden12CACM} and their atomic counterparts~\citep{AlandDGMS11,AwerbuchAE13,ChristodoulouGGS19,ChristodoulouK05}. Many of these bounds depend heavily on the growth rate of the cost functions that measure delays along connections in the networks. In contrast, the structure of networks resulting in worst-case inefficiency is often rather generic.

	\section{Model and Preliminaries}
	\label{sec:preliminaries}

	Let $\Theta$ be a finite set of \emph{states of nature} and let $G=(V,E)$ be a directed graph with edge set $E$. For every edge~$e \in E$ and every state $\theta \in \Theta$ , there is a \emph{cost function} $c_{e}^\theta : \R_{\geq 0} \to \R_{\geq 0}$ of the form
	\[
		c_e^{\theta}(x) = a_e \, x + b_e^{\theta} \quad \text{ with $a_e \in \mathbb{R}_{>0}$ and $b_e^{\theta} \in \mathbb{R}_{\geq 0}$.}
	\]
	Further, there is a set $R = \{1,\ldots,r\}$ of \emph{commodities}.	
	Every commodity  $i \in R$ corresponds to a continuum of agents of total volume $d_i > 0$, and is associated with the interval $[0,d_i]$.
    For every commodity $i \in R$, there is a designated source vertex $s_i \in V$ and a designated destination vertex~$t_i\in V$.
    Let $\mathcal{S}_i\subseteq 2^E$ be the set of all $s_i$--$t_i$ paths. Then, a \emph{path-flow} is a distribution of the agents of each commodity $i\in R$ over $\calS_i$ and is represented by a vector $x = (x_{i,S})_{i \in R, S \subseteq E}$ satisfying the three properties:
    \begin{enumerate}
    	\item $\sum_{S \subseteq E} x_{i,S} = d_i$ for all $i\in R$,
    	\item $x_{i,S} \geq 0$ for all $i \in R$, $S \subseteq E$, and
    	\item $x_{i,S} = 0$ for all $i \in R$, $S \notin \calS_i$.
    \end{enumerate}
    Let $\calX$ denote the set of those vectors. Every path-flow $x \in \calX$ induces a \emph{load} on every edge~$e \in E$ given by
	\[
	x_e = \sum_{i \in R} \sum_{S \subseteq E : e \in S} x_{i,S}.
	\]
	
	Let $\Delta(\Theta)$ be the set of probability distributions over the states of nature $\Theta$, i.e., $\Delta(\Theta)=\bigl\{\mu\in [0,1]^{|\Theta|} : \sum_{\theta\in\Theta} \mu_\theta =1 \bigr\}$.
	An element $\mu\in\Delta(\Theta)$ is called a \emph{belief} and induces \emph{believed cost} for each edge~$e$ via 
	\[
		c_e(x_e \mid \mu) = \sum_{\theta \in \Theta} \mu_{\theta}\, c_e^\theta(x_e).
	\]
	A path-flow $x \in \mathcal{X}$ is a \emph{Wardrop equilibrium} for belief $\mu$ if all agents only use paths that are minimal with respect to the believed costs, i.e., if
	\begin{align*}
	\sum_{e\in S}c_e(x_e\mid \mu)
	\leq \sum_{e\in S'} c_e(x_e\mid \mu) \quad \text{ for all } i\in R \text{ and } S,S'\in \mathcal{S}_i \text{ with } x_{i,S}>0.
	\end{align*}
	
	The existence of Wardrop equilibria for non-atomic games with a \emph{single} state \citep[cf.][]{BeckmannMW56} carries over to arbitrary beliefs over multiple states in the following sense.

	\begin{proposition}
		\label{pro:beckmann}
		Given any belief $\mu \in \Delta(\Theta)$, a path-flow $x \in \mathcal{X}$ is a Wardrop equilibrium if and only if 
		\begin{align*}
			x \in \arg\min \Biggl\{\sum_{e \in E} \int_{0}^{y_e} c_{e}(t \mid \mu)\,\mathrm{d}t \;:\; y \in \mathcal{X}\Biggr\}.
		\end{align*}	
	\end{proposition}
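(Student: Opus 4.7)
The plan is to reduce the claim to the classical single-state Beckmann-McGuire-Winsten result by treating the conditional expected cost as a new per-element cost function. For a fixed $\mu \in \Delta(\Theta)$, I would define $\tilde c_e(y) := c_e(y \mid \mu) = \sum_{\theta \in \Theta} \mu_\theta\, c_e^\theta(y)$ for every $e \in E$. Since each $c_e^\theta$ is convex and non-decreasing and the coefficients $\mu_\theta$ are non-negative, each $\tilde c_e$ is again convex and non-decreasing as a non-negative linear combination of such functions. The game conditioned on the belief $\mu$ is therefore, structurally, a standard deterministic non-atomic congestion game with cost functions $\tilde c_e$, and the Wardrop condition stated in the preliminaries is exactly the usual Wardrop condition for $\tilde c_e$.

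Next, I would verify that the objective in the proposition is the correct Beckmann potential for $\tilde c_e$. Since the sum over $\theta$ is finite, integration commutes with the convex combination, so
\[
\sum_{e \in E} \int_0^{y_e} c_e(t \mid \mu)\, \mathrm{d}t \;=\; \sum_{\theta \in \Theta} \mu_\theta \sum_{e \in E} \int_0^{y_e} c_e^\theta(t)\, \mathrm{d}t,
\]
which exhibits the objective as a non-negative combination of classical Beckmann potentials, and in particular as a convex function of $y$ (its partial derivative with respect to $y_e$ equals $\tilde c_e(y_e)$, which is non-decreasing). The feasible set $\mathcal{X}$ is a polytope defined by the demand equalities $\sum_{S \in \mathcal{S}_i} x_{i,S} = d_i$ and the non-negativity and support constraints.

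With these properties in hand, the equivalence is the standard KKT argument for the convex program $\min\{\Phi(y) : y \in \mathcal{X}\}$, where $\Phi$ is the potential above. Introducing a multiplier $\lambda_i$ for each demand constraint, the KKT stationarity and complementary slackness conditions read
\[
\sum_{e \in S} \tilde c_e(x_e) \;\geq\; \lambda_i \quad \text{for all } S \in \calS_i, \qquad \text{with equality whenever } x_{i,S} > 0.
\]
Since the left-hand side is precisely $\pi_{i,S}(x \mid \mu)$, these conditions are exactly the Wardrop equilibrium conditions with respect to $\mu$. Convexity of $\Phi$ and of $\mathcal{X}$ makes the KKT conditions both necessary and sufficient for optimality, which yields the "if and only if" in the statement.

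I do not expect a genuine obstacle: the argument is essentially a verification that taking expectations over $\theta$ preserves convexity and monotonicity of $c_e^\theta$ and commutes with the Beckmann integral. The only mild care point is to write down the KKT system for $\mathcal{X}$ correctly, respecting the implicit constraint $x_{i,S} = 0$ for $S \notin \calS_i$, so that the variational inequality is quantified only over $S, T \in \calS_i$ as in the Wardrop definition.
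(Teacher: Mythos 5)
Your argument is correct and is precisely the reduction the paper has in mind: it states the proposition without proof, appealing to the classical Beckmann--McGuire--Winsten result applied to the averaged cost functions $c_e(\cdot \mid \mu)$, and your verification that these are convex and non-decreasing, that the potential is the corresponding Beckmann objective, and that the KKT conditions reproduce the Wardrop conditions is exactly the ``straightforward carry-over'' being invoked. No gaps.
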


    Since all cost functions $c_e^\theta$ are increasing (i.e., $a_e > 0$ for all $e \in E$), so are their convex combinations $c_e(x \mid \mu)$. Hence, the optimization problem in Proposition~\ref{pro:beckmann} is strictly convex and has a unique solution, i.e., the Wardrop equilibrium is unique. In the following, we denote by $x^*(\mu)$ the unique Wardrop equilibrium for a belief $\mu \in \Delta(\Theta)$. Let further
	\begin{align*}
		\cost(x \mid \mu) = \sum_{e \in E} x_e c_e(x_e \mid \mu)
	\end{align*}
	denote the total cost of a path-flow $x$ for $\mu \in \Delta(\Theta)$. For the Wardrop equilibrium $x^*(\mu)$ for $\mu \in \Delta(\Theta)$, we use the short notation $\cost(\mu) = \cost(x^*(\mu) \mid \mu)$.
	The thus defined function $C:\Delta(\Theta)\rightarrow \R$ assigns each belief $\mu$ the cost of the resulting Wardrop equilibrium.
    Different beliefs may lead to different Wardrop equilibria and, hence, to different costs.
	This paper is concerned with the question how a benevolent principal, who knows the true realization $\theta\in \Theta$, can (partially) reveal this information to induce equilibria with low cost.
	This so-called \emph{signaling problem} is formalized in the following way.
	
	Without having received any information from the principal, the agents act upon a common prior belief $\mu^* \in \Delta(\Theta)$ that corresponds to the true probabilities with which the corresponding states are realized. 
	The principal conveys information to the agents in the form of public signals. To this end, the principal has access to a finite set of signals $\Sigma$.
	A \emph{signaling scheme} is given by a matrix $\Phi = (\phi_{\theta,\sigma})_{\theta \in \Theta, \sigma \in \Sigma}$ such that $\phi_{\theta,\sigma} \in [0,1]$ is the probability that state $\theta$ is realized and signal~$\sigma$ is issued.
	Since the prior belief corresponds to the actual realization probabilities, we have the equation $\sum_{\sigma \in \Sigma} \phi_{\theta,\sigma} = \prior_{\theta}$ for each $\theta \in \Theta$. 	
	Let further $\phi_\sigma = \sum_{\theta \in \Theta} \phi_{\theta,\sigma}$ be the total probability of issuing signal $\sigma$.
	
	We proceed to explain the timeline of the signaling problem. First, the principal commits to a signaling scheme $\Phi$ and communicates this to all players so %such
    that the prior $\prior$ and the signaling scheme $\Phi$ are public knowledge. Then the state of nature $\theta$ is realized. The principal perceives the realized state $\theta$ and sends public signal $\sigma$ with probability $\phi_{\theta,\sigma} / \mu^*_{\theta}$.
	All agents receive the same signal~$\sigma$ and update their belief about the states of nature by a Bayesian update, i.e., their posterior belief is $\mu_{\sigma} = (\mu_{\theta,\sigma})_{\theta \in \Theta}\in \Delta(\Theta)$ defined by $\mu_{\theta,\sigma} = \phi_{\theta,\sigma} / \phi_{\sigma}$ for all $\theta \in \Theta$.
	Afterward, the corresponding Wardrop equilibrium $x^*(\mu_\sigma)$ emerges and results in cost of $C(\mu_\sigma)$. 
	Hence, every signaling scheme $\Phi$ induces posterior beliefs $\mu_\sigma$ for all $\sigma \in \Sigma$ and, thus, yields an overall cost of
	\begin{align}
	\label{eq:c} 
		\cost(\Phi) = \sum_{\sigma \in \Sigma} \phi_{\sigma} \, \cost(\mu_\sigma).
	\end{align}
    The goal of the principal is to choose the signaling scheme $\Phi$ that minimizes \eqref{eq:c}.

    Mathematically, every signaling scheme $\Phi$ yields a convex decomposition of the prior belief $\mu^*=\sum_{\sigma\in\Sigma}\phi_\sigma \mu_\sigma$. 
     Conversely, it can be shown 
     \citep{AumannM95,KamenicaG11} that every convex decomposition of the prior belief can be realized by a signaling scheme.

    Further, it is a direct consequence of Caratheodory's Theorem that $|\Theta|$ many different signals suffice to obtain the desired convex decomposition.
    As a consequence, the problem of computing a signaling scheme minimizing \eqref{eq:c} can be phrased as
    \begin{align*}
    	\inf \bigg\{ \sum_{\sigma\in \Sigma} \phi_\sigma C(\mu_\sigma) : |\Sigma|\leq |\Theta|, \phi_\sigma\in[0,1],\mu_\sigma\in\Delta(\Theta) \text{ f.a.}  \sigma\in \Sigma \text{ s.t. } \sum_{\sigma\in\Sigma}\phi_\sigma\mu_\sigma=\mu^* , \sum_{\sigma \in \Sigma} \varphi_\sigma = 1 \bigg\}.
    \end{align*}

The cost of an optimal scheme as a function of $\mu^*$ corresponds to the lower convex envelope of the cost function $\cost$ \citep{KamenicaG11}. Therefore, in the case $|\Theta| = 2$, if the signaling scheme $\Phi$ has a cost that can be expressed as a mixture of $\cost((1,0)$ and $\cost((0,1))$, then full information revelation optimal. This means that optimality of full information revelation readily follows if $C$ is a concave function.
    
We illustrate these concepts with the following two examples. The first example presents an instance with two states for which full information revelation is optimal.

\begin{figure}[t]
\scriptsize
\begin{subfigure}[b]{0.3\textwidth}
\begin{center}
\begin{tikzpicture}[scale = 0.7]
\node[state,draw=Green,fill=Green,label=left:{$s$}] (s) at (0,0) [circle] {}; 
\node[state,draw=Green,fill=Green,label=right:{$t$}] (t) at (4,0) [circle] {}; 
				
\path[->,Green] (s) edge[bend left=60]  node[above,yshift=0mm] {\textcolor{black}{$2x+5$}} (t); 
\path[->,Green] (s) edge[bend right=60]  node[below,yshift=0mm] {\textcolor{black}{$2x$}} (t); 
\path[->,Green] (s) edge node[below,yshift=0mm] {\textcolor{black}{$3$}} (t);   
\end{tikzpicture}
\end{center}
\caption{\label{fig:concave1}}
\end{subfigure}
\begin{subfigure}[b]{0.3\textwidth}
\begin{center}
\begin{tikzpicture}[scale = 0.7]
\node[state,draw=Red,fill=Red,label=left:{$s$}] (s) at (0,0) {}; 
\node[state,draw=Red,fill=Red,label=right:{$t$}] (t) at (4,0) {}; 
				
\path[->,Red] (s) edge[bend left=60]  node[above,yshift=0mm] {\textcolor{black}{$2x$}} (t); 
\path[->,Red] (s) edge[bend right=60]  node[below,yshift=0mm] {\textcolor{black}{$2x+4$}} (t); 
\path[->,Red] (s) edge node[below,yshift=0mm] {\textcolor{black}{$3$}} (t);   	
\end{tikzpicture}
\end{center}
\caption{\label{fig:concave2}}
\end{subfigure}
\begin{subfigure}[b]{0.39\textwidth}
\begin{center}
\begin{tikzpicture}[scale=0.7]
\draw[axis,->] (0,-0.2) to (0,3.5) node[above] {$\cost(\mu)$}; 
\draw[axis,->] (-0.2,0) to (4.5,0) node[right] {$\mu_{\theta_2}$};
				
\draw[axis] (4,0.2) -- (4,-0.2) node[below] {$1$};
\draw[axis] (3,0.2) -- (3,-0.2);
\draw[axis] (2,0.2) -- (2,-0.2);
\draw[axis] (1,0.2) -- (1,-0.2);
\draw[axis] (0,0.2) -- (0,-0.2) node[below] {$0$};

\draw[axis] (0.2,0) -- (-0.2,0) node[left] {$0$};
\draw[axis] (0.2,1) -- (-0.2,1);
\draw[axis] (0.2,2) -- (-0.2,2);
\draw[axis] (0.2,3) -- (-0.2,3) node[left] {$3$};

\draw[thick,YellowOrange] (0,2) -- (4,2);
    
\draw[ultra thick,CornflowerBlue!50!MidnightBlue] (0,2) -- (1,3) -- (3.2,3) -- (4,2);
\fill[Green] (0,2) circle (3.0pt);
\fill[Green] (0,0) circle (3.0pt);
\draw[thick,Green] (0,2) -- (0,0);

\fill[Red] (4,2) circle (3.0pt);
\fill[Red] (4,0) circle (3.0pt);
\draw[thick,Red] (4,2) -- (4,0);
\end{tikzpicture}
\end{center}
\caption{}
\label{fig:concave3}
\end{subfigure}
\caption{Instance in Example~\ref{ex:concave}: (a) cost functions for state $\theta_1$; (b) cost functions for state $\theta_2$; (c) cost $\cost(\mu)$ of the Wardrop equilibrium as a function of the belief, described by parameter $\mu_{\theta_2} = 1-\mu_{\theta_1} \in [0,1]$.}
\label{fig:concaveExample}
\end{figure}
	
	\begin{example}\label{ex:concave} \rm
		Consider a single-commodity network with two vertices $V=\{s,t\}$ and three parallel edges $E = \{e_1,e_2,e_3\}$. There are two states $\Theta=\{\theta_1,\theta_2\}$ and a single commodity with a volume of $d_1=1$. The cost functions $c_{e_i}^{\theta_1}$ and $c_{e_i}^{\theta_2}$ are given in Figure~\ref{fig:concave1} and \ref{fig:concave2}, respectively. 
		
		We analyze the Wardrop equilibrium for all distributions $\mu \in \Delta(\Theta)$, where $\Delta(\Theta)$ is interpreted as the unit interval for $\mu_{\theta_2} = 1 - \mu_{\theta_1} \in [0,1]$.
		For $\mu_{\theta_2} \in [0,1/4]$, only the lowest edge is used, since the total cost for a volume of $1$ is at most $3$, whereas both other edges have an offset of at least $3$. For $\mu_{\theta_2} \in [1/4, 2/5]$, only the two lower edges are used, since the upper edge has an offset of at least 3. For $\mu_{\theta_2} \in [2/5, 3/4]$ all three edges carry flow. Then, for $\mu_{\theta_2} \in [3/4, 4/5]$, only the two upper edges are used, since the offset of the lowest edge is at least $3$. Finally, for $\mu_{\theta_2} \in [4/5, 1]$, only the upper edge is used. \Cref{fig:concave3} shows the cost function $\cost(\mu)$ of the resulting Wardrop equilibrium for all $\mu \in \Delta(\Theta)$ in blue. The cost function $\cost(\mu)$ is piecewise linear and concave over $\Delta(\Theta)$. 
		It is easy to convince ourselves that full information revelation is an optimal signaling scheme. Indeed, assume there is a signaling scheme $\Phi$ that decomposes the prior $\mu_{\theta_2}^*$ into the two beliefs $\mu_l$ and $\mu_r$, such that $\mu_l\leq \mu_{\theta_2}^*\leq \mu_r$. Then,
		\begin{align*}
			 C(\Phi)=\frac{\mu_r-\mu_{\theta_2}^*}{\mu_r-\mu_l} \cost(\mu_l) + \frac{\mu_{\theta_2}^*-\mu_l}{\mu_r-\mu_l} \cost(\mu_r).
		\end{align*}
		However, by concavity, we have $C(\mu_l)\geq \mu_lC(1)+(1-\mu_l) C(0)$ and $C(\mu_r)\geq \mu_r C(1)+(1-\mu_r) C(0)$. This yields
		\begin{align*}
			 C(\Phi)
			 &\geq \frac{\mu_r-\mu_{\theta_2}^*}{\mu_r-\mu_l} \Big( \mu_lC(1)+(1-\mu_l) C(0) \Big) + \frac{\mu_{\theta_2}^*-\mu_l}{\mu_r-\mu_l} \Big(\mu_r C(1)+(1-\mu_r) C(0)\Big)\\
			 &\geq (1-\mu_{\theta_2}^*) \cost(0) + \mu_{\theta_2}^* \cost(1),
		\end{align*}
		where the latter corresponds to the cost of the full information revelation scheme, shown in orange in \Cref{fig:concave3}.
		\hfill $\blacksquare$
	\end{example}

In the second example, we consider an instance with three states for which the optimal signaling scheme does not reveal full information.
	
\begin{example}
 \label{ex:3d}

 Consider the single-commodity network with four vertices in Figures~\ref{fig:3d1}--\ref{fig:3d3}. There are three states $\theta_1$, $\theta_2$, and $\theta_3$. There is a single commodity with volume $d_1 = 1$. The cost functions for the three states are shown in Figures~\ref{fig:3d1}--\ref{fig:3d3}. Figure~\ref{fig:3d4} shows the cost of the Wardrop equilibrium as a function of the distribution $\mu$ where only $\mu_{\theta_2}, \mu_{\theta_3} \in [0,1]$ with $\mu_{\theta_2} + 
 \mu_{\theta_3} \leq 1$ are shown and $\mu_{\theta_1} = 1 - \mu_{\theta_2} - \mu_{\theta_3}$. Every linear segment of the cost function corresponds to a support of the Wardrop equilibrium of the underlying network. The optimal convex decomposition of the prior $\mu^* = (1/3,1/3,1/3)$ shown in Figure~\ref{fig:3d4} is via the red point in the front, the blue point on the right and the purple point in the back and is indicated by the orange hyperplane through these three points. It is worth noting that the purple point corresponds to a signal that is issued with positive probability both for state $\theta_2$ and $\theta_3$. Thus, the optimal signaling scheme is not full information revelation.  
 \hfill $\blacksquare$

\begin{figure}
\scriptsize
\newcommand{\zscale}{0.2}
\tdplotsetmaincoords{70}{110}
\begin{minipage}[b]{0.33\textwidth}
	\begin{subfigure}[b]{\textwidth}
\begin{center}
\begin{tikzpicture}[xscale=3,yscale=2]
\begin{scope}
\node[state,draw=Green,fill=Green] (v1) at (0,0.8) {};
\node[state,draw=Green,fill=Green,label=right:{$t$}] (t1) at (0.5,0.4) {};
\node[state,draw=Green,fill=Green,label=left:{$s$}] (s1) at (-0.5,0.4) {};
\node[state,draw=Green,fill=Green] (w1) at (0,0) {};

\draw[->,Green] (s1) to node[above left] {\textcolor{black}{$x$}} (v1); 
\draw[->,Green] (v1) to node[above right] {\textcolor{black}{$3$}} (t1); 
\draw[->,Green] (w1) to node[below right] {\textcolor{black}{$x+1$}} (t1); 
\draw[->,Green] (s1) to node[below left] {\textcolor{black}{$5$}} (w1); 
\draw[->,Green] (v1) to node[right] {\textcolor{black}{$1$}} (w1); 
\end{scope}
\end{tikzpicture}
\end{center}
\caption{}
\label{fig:3d1}
\end{subfigure}
\vspace{0.3cm}

\begin{subfigure}[b]{\textwidth}
\begin{center}
\begin{tikzpicture}[xscale=3,yscale=2]
\begin{scope}
\node[state,draw=Red,fill=Red] (v2) at (0,0.8) {};
\node[state,draw=Red,fill=Red,label=right:{$t$}] (t2) at (0.5,0.4) {};
\node[state,draw=Red,fill=Red,label=left:{$s$}] (s2) at (-0.5,0.4) {};
\node[state,draw=Red,fill=Red] (w2) at (0,0) {};

\draw[->,Red] (s2) to node[above left] {\textcolor{black}{$x+2$}} (v2); 
\draw[->,Red] (v2) to node[above right] {\textcolor{black}{$2$}} (t2); 
\draw[->,Red] (w2) to node[below right] {\textcolor{black}{$x+2$}} (t2); 
\draw[->,Red] (s2) to node[below left] {\textcolor{black}{$2$}} (w2); 
\draw[->,Red] (v2) to node[right] {\textcolor{black}{$2$}} (w2); 
\end{scope}
\end{tikzpicture}
\end{center}
\caption{}
\label{fig:3d2}
\end{subfigure}
\vspace{0.3cm}

\begin{subfigure}[b]{\textwidth}
\begin{center}
\begin{tikzpicture}[xscale=3,yscale=2]
\begin{scope}
\node[state,draw=MidnightBlue,fill=MidnightBlue] (v3) at (0,0.8) {};
\node[state,draw=MidnightBlue,fill=MidnightBlue,label=right:{$t$}] (t3) at (0.5,0.4) {};
\node[state,draw=MidnightBlue,fill=MidnightBlue,label=left:{$s$}] (s3) at (-0.5,0.4) {};
\node[state,draw=MidnightBlue,fill=MidnightBlue] (w3) at (0,0) {};

\draw[->,MidnightBlue] (s3) to node[above left] {\textcolor{black}{$x+3$}} (v3); 
\draw[->,MidnightBlue] (v3) to node[above right] {\textcolor{black}{$2$}} (t3); 
\draw[->,MidnightBlue] (w3) to node[below right] {\textcolor{black}{$x$}} (t3); 
\draw[->,MidnightBlue] (s3) to node[below left] {\textcolor{black}{$0$}} (w3); 
\draw[->,MidnightBlue] (v3) to node[right] {\textcolor{black}{$0$}} (w3); 
\end{scope}
\end{tikzpicture}
\end{center}
\caption{}
\label{fig:3d3}
\end{subfigure}
\end{minipage}
\begin{minipage}[b]{0.66\textwidth}
	\begin{subfigure}[b]{\textwidth}
\begin{center}
\begin{tikzpicture}[xscale=4,yscale=4,tdplot_main_coords]

\tikzstyle{face}=[rounded corners=0.1,line width=0.2pt, fill=teal!40, fill opacity = .5];

\draw[axis,->] (0,0,0) -- (1.2,0,0);
\node (label2) at (1.35,0,0) {$\mu_{\theta_2}$}; 
\draw[axis,->] (0,0,0) -- (0,1.2,0);
\node (label3) at (0,1.35,0) {$\mu_{\theta_3}$}; 
\draw[axis,->] (0,0,0) -- (0,0,\zscale*6);
\node (labelz) at (0,0,\zscale*6.6) {$C(\mu)$};

% z-axis labels
\draw[axis] (0,0.02,\zscale*0) -- (0,-0.02,\zscale*0) node[left] {$0$};
\draw[axis] (0,0.02,\zscale*1) -- (0,-0.02,\zscale*1) node[left] {$1$};
\draw[axis] (0,0.02,\zscale*2) -- (0,-0.02,\zscale*2) node[left] {};
\draw[axis] (0,0.02,\zscale*3) -- (0,-0.02,\zscale*3) node[left] {};
\draw[axis] (0,0.02,\zscale*4) -- (0,-0.02,\zscale*4) node[left] {};
\draw[axis] (0,0.02,\zscale*5) -- (0,-0.02,\zscale*5) node[left] {$5$};

% mu-theta-2 axis labels
\draw[thin] (0.5,0.02,0) -- (0.5,-0.02,0) node[left] {$1/2$};
\draw[thin] (1,0.02,0) -- (1,-0.02,0) node[left] {$1$};

% mu-theta-3 axis labels
\draw[thin] (0.02,0.5,0) -- (-0.02,0.5,0) node[above right] {$\!\!\!\!1/2$};
\draw[thin] (0.02,1,0) -- (-0.02,1,0) node[above right] {$1$};

\coordinate (A1) at (1/9,1/3, \zscale*43/9);
\coordinate (A2) at (0,3/7, \zscale*31/7);
\coordinate (O)  at (0, 0, \zscale*4);
\coordinate (A10) at (1/9,1/3, 0);
\coordinate (A20) at (0,3/7, 0);
\coordinate (O0)  at (0, 0, 0);

\coordinate (B1) at (10/27, 1/9, \zscale*124/27);
\coordinate (B2) at (1/3, 0, \zscale*13/3);
\coordinate (B3) at (2/3, 0, \zscale*14/3);
\coordinate (B10) at (10/27, 1/9, 0);
\coordinate (B20) at (1/3, 0, 0);
\coordinate (B30) at (2/3, 0, 0);

\coordinate (C1) at (4/5, 1/5, \zscale*21/5);
\coordinate (C2) at (0, 1, \zscale*1);
\coordinate (C3) at (0, 4/7, \zscale*25/7);
\coordinate (C4) at (4/27, 4/9, \zscale*109/27);
\coordinate (C10) at (4/5, 1/5, 0);
\coordinate (C20) at (0, 1, 0);
\coordinate (C30) at (0, 4/7, 0);
\coordinate (C40) at (4/27, 4/9, 0);

\coordinate (D1) at (1, 0, \zscale*9/2);
\coordinate (D2) at (C1);
\coordinate (D3) at (C4);
\coordinate (D4) at (B1);
\coordinate (D5) at (B3);
\coordinate (D10) at (1, 0, 0);
\coordinate (D20) at (C10);
\coordinate (D30) at (C40);
\coordinate (D40) at (B10);
\coordinate (D50) at (B30);

\coordinate (E1) at (O);
\coordinate (E2) at (A1);
\coordinate (E3) at (B1);
\coordinate (E4) at (B2);
\coordinate (E10) at (O0);
\coordinate (E20) at (A10);
\coordinate (E30) at (B10);
\coordinate (E40) at (B20);

\coordinate (F1) at (A1);
\coordinate (F2) at (C4);
\coordinate (F3) at (C3);
\coordinate (F4) at (A2);
\coordinate (F10) at (A10);
\coordinate (F20) at (C40);
\coordinate (F30) at (C30);
\coordinate (F40) at (A20);

\coordinate (G1) at (A1);
\coordinate (G2) at (C4);
\coordinate (G3) at (B1);
\coordinate (G10) at (A10);
\coordinate (G20) at (C40);
\coordinate (G30) at (B10);

\draw[face,fill=red,fill opacity=0,gray] (O0) -- (A10) -- (A20) -- cycle; %zig zag
\draw[face,fill=blue,fill opacity=0,gray] (B10) -- (B20) -- (B30) -- cycle; %top
\draw[face,fill=pink,fill opacity=0,gray] (C10) -- (C20) -- (C30) -- (C40) -- cycle; %bottom
\draw[face,fill=brown,fill opacity=0,gray] (D10) -- (D20) -- (D30) -- (D40) -- (D50) -- cycle; % top bottom
\draw[face,fill=orange,fill opacity=0,gray] (E10) -- (E20) -- (E30) -- (E40) -- cycle; % top zig zag
\draw[face,fill=green,fill opacity=0,gray] (F10) -- (F20) -- (F30) -- (F40) -- cycle; % bottom zig zag
\draw[face,fill=teal,fill opacity=0,gray] (G10) -- (G20) -- (G30) -- cycle; %all
\draw[face,fill=white, fill opacity=0, darkgray!60!gray] (O0) -- (D10) -- (C20) -- cycle;

\draw[thick,CarnationPink] (1/3,1/3,\zscale*25/6) -- (1/3,1/3,0);
\draw[thick,DarkOrchid] (B2) -- (1/3, 0, 0);
\draw[thick,Green] (0,0,\zscale*4) -- (0, 0, 0);
\draw[thick,MidnightBlue] (C2) -- (0, 1, 0);
\draw[thick,Red] (D1) -- (1, 0, 0);

\pic at (1/3,1/3,\zscale*3.25) {ellip};

\draw[draw=none,face,fill=YellowOrange!80!Yellow] (B2) -- (C2) -- (D1) -- cycle;

\draw[face,gray,fill=CornflowerBlue!70!white] (D1) -- (D2) -- (D3) -- (D4) -- (D5) -- cycle; % top bottom vorne links
\draw[face,gray,fill=CornflowerBlue!40!white] (C1) -- (C2) -- (C3) -- (C4) -- cycle; %bottom vorne rechts

\draw[face,gray,fill=CornflowerBlue!30!MidnightBlue] (B1) -- (B2) -- (B3) -- cycle; %top mitte links
\draw[face,gray,fill=CornflowerBlue!80!MidnightBlue] (G1) -- (G2) -- (G3) -- cycle; %all mitte mitte
\draw[face,gray,fill=CornflowerBlue!70!white] (F1) -- (F2) -- (F3) -- (F4) -- cycle; % bottom zig zag mitte rechts

\draw[face,gray,fill=CornflowerBlue!30!MidnightBlue] (E1) -- (E2) -- (E3) -- (E4) -- cycle; % top zig zag hinten links
\draw[face,gray,fill=CornflowerBlue!10!MidnightBlue] (O) -- (A1) -- (A2) -- cycle; %zig zag hinten rechts

\fill[Green] (0,0,\zscale*4) circle (0.4pt);
\fill[Green] (0,0,0) circle (0.4pt);

\fill[MidnightBlue] (C2) circle (0.4pt);
\fill[MidnightBlue] (0,1,0) circle (0.4pt);
\fill[Red] (D1) circle (0.4pt);
\fill[Red] (1,0,0) circle (0.4pt);

\fill[DarkOrchid] (B2) circle (0.4pt);
\fill[DarkOrchid] (1/3,0,0) circle (0.4pt);

\fill[CarnationPink] (1/3,1/3,\zscale*25/6) circle (0.4pt);
\fill[CarnationPink] (1/3,1/3,0) circle (0.4pt);
\end{tikzpicture}
\end{center}
\caption{}
\label{fig:3d4}
\end{subfigure}
\end{minipage}

\caption{Instance in Example~\ref{ex:3d}: (a) cost functions for state $\theta_1$; (b) cost functions for state $\theta_2$; (c) cost functions for state $\theta_3$; (d) cost $C(\mu)$ of a Wardrop equilibrium as a function of the belief, described by the parameters $\mu_{\theta_2}$ and $\mu_{\theta_3}$ with $\mu_{\theta_2} + \mu_{\theta_3} \leq 1$.}
\label{fig:3d}
\end{figure}
\end{example}

\section{Structural Properties}
\label{sec:structural}

In this section, we provide structural properties of the cost of the induced Wardrop equilibrium as a function of the belief. We first prove that for general single-commodity network congestion games, the cost of the unique Wardrop equilibrium with respect to $c_e(\cdot \mid \mu)$ is piecewise linear in $\mu \in \Delta(\Theta)$.
For the proof, we introduce the following notation. For a fixed source vertex $s$, we define the set of edges
\begin{align*}
\mathcal{A} = \{A \subseteq E : G = (V,A) \text{ is connected and contains an $s$--$v$ path for all $v \in V$}\} \, .
\end{align*}
Let $x$ be an $s$--$t$ flow. For $v \in V$, let $\psi_v$ be the cost of a shortest path with respect to $c_e(x_e \mid \mu)$ from $s$ to $v$. We call an edge $e = (v,w)$ \emph{active} in $x$ if $\psi_w - \psi_v = c_e(x_e \mid \mu)$. Let $A(x)$ be the set of active edges for a flow $x$. For every flow $x$, $A(x)$ is connected and every vertex $v$ is reached by a path of active edges from $s$; thus $A(x) \in \mathcal{A}$.  

\begin{definition}[Support]
Let $x^* : \Delta(\Theta) \to \mathbb{R}_{\geq 0}^E$ be the unique Wardrop equilibrium with respect to $c_e(\cdot \mid \mu)$ with costs $C : \Delta(\Theta) \to \mathbb{R}_{\geq 0}$ defined as $\sum_{e \in E} x_e^* c_e(x_e^* \mid \mu)$. The set~$A(x) \in \mathcal{A}$ of active edges for a flow $x$ is called the \emph{support} of the flow.
\end{definition}

\begin{restatable}{lemma}{lempiecewiselinear}
   \label{lem:piecewise-linear}
    For a single-commodity network congestion game, the unique Wardrop equilibrium flow and the cost of the unique Wardrop equilibrium are piecewise linear in $\mu$. In particular, for every $A \in \mathcal{A}$, there is a possibly empty polytope $P_A \subseteq \Delta(\Theta)$ such that $P_A = \{\mu \in \Delta(\Theta) \mid A(x^*(\mu)) = A\}$ and $x^*$ and $C$ are affine on $P_A$. 
\end{restatable}

Here, the polytope $P_A$ is empty if $A \in \mathcal{A}$ is never used in a Wardrop equilibrium.

In the proof, we first establish that the Wardrop equilibrium flow and its cost are linear in~$\mu$ when restricted to a fixed support~$A\in\mathcal{A}$.
This follows from the Karush--Kuhn--Tucker optimality conditions \citep[cf.][Theorems~3.25 and 3.27]{Ruszczyski06} and the shortest path potentials which lead to a characterization of a Wardrop equilibrium as a feasible solution to the following equations 
\begin{subequations}
\label{eq:wardrop-equation-text}
\begin{align}
\pi_v + a_e x_e + \sum_{\theta \in \Theta} \mu_{\theta} b_e^{\theta}  &= \pi_w && \text{ for all } e \in A, \label{eq:wardrop-equation-1-text}\\
\sum_{e \in \delta^+(v)} x_e - \sum_{e \in \delta^-(v)} x_e &= \beta_v && \text{ for all } v \in V, \label{eq:wardrop-equation-2-text}\\
\pi_s &= 0, \label{eq:wardrop-equation-3-text}
\end{align}
\end{subequations}
where the value~$\beta_v$ is the balance of the flow at vertex~$v$, i.e., for an $s$--$t$ flow of value $d$, we have $\beta_s = d$, $\beta_t = -d$, and $\beta_v = 0$ for all $v \notin \{s,t\}$. 
We further have the inequalities
\begin{subequations}
\label{eq:wardrop-inequality-text}
\begin{align}
\pi_v + a_e x_e + \sum_{\theta \in \Theta} \mu_{\theta} b_e^{\theta} &\geq \pi_w && \text{ for all } e \in E \setminus A,\label{eq:wardrop-inequality-1-text}\\
x_e &\geq 0 && \text{ for all } e \in E.\label{eq:wardrop-inequality-2-text}
\end{align}
\end{subequations}
Note that conditions $x_e = 0$ for all $e \in E \setminus A$ are implicitly fulfilled due to the restriction of the characterization to $A$. Afterward, we prove that the linear system \eqref{eq:wardrop-equation-1-text}--\eqref{eq:wardrop-equation-3-text} has full rank and demonstrate that the solution space forms a polytope. For the full proof see the appendix.

We obtain from Lemma~\ref{lem:piecewise-linear}, that the cost of the Wardrop equilibrium $C$ (which can be expressed as $d\pi_t$) is a piecewise linear function on $\Delta(\Theta)$.
As a further corollary from the proof of Lemma~\ref{lem:piecewise-linear}, we obtain that for a given $\mu \in \Delta(\Theta)$, the per-unit cost $\pi_t$ of the Wardrop equilibrium is strictly increasing in the demand.
\citet[Proposition 4.2]{CominettiDS24} showed that the per-unit cost is non-decreasing in the demand, and \citet[Corollary 4]{KlimmW22} gave a similar result for series-parallel networks.
The full proof is deferred to the appendix.

\begin{restatable}{corollary}{cormonotone}
\label{cor:monotone}
Fix $\mu \in \Delta(\Theta)$ and let $\pi_t(d)$ be the per-unit cost of the Wardrop equilibrium as a function of $d$. Then $\pi_t(d)$ is strictly increasing.
\end{restatable}

 \section{Full Information Revelation}
	\label{sec:sepa}
	%-------------------------------------------------------------New Section-----------------------------------------------------
As seen in \Cref{ex:concave}, full information revelation is an optimal signaling scheme whenever the cost function is a concave function of the belief.
In this section, we will show, that the cost function has this property for all single-commodity networks when the underlying graph is series-parallel.
Furthermore, for every graph that is not series-parallel, there are cost functions for the edges such that full information revelation is not an optimal signaling scheme.

Formally, a graph $G = (V,E)$ with two designated vertices $s,t \in V$ is a \emph{series-parallel graph} if it either consists of a single edge $E = \{ \{s,t\}\}$ only, or it is obtained by a parallel or serial composition of two series-parallel graphs. For two series-parallel graphs $G_1 = (V_1, E_1)$ and $G_2 = (V_2,E_2)$ with designated vertices $s_1,t_1 \in V_1$ and $s_2,t_2 \in V_2$, 
	the \emph{parallel composition} is the graph $G = (V,E)$ created from the disjoint union of graphs $G_1$ and $G_2$ by merging the vertices $s_1$ and $s_2$ into a new vertex $s$, and merging the vertices $t_1$ and $t_2$ into a new vertex $t$.
	The \emph{serial composition} of $G_1$ and $G_2$ is the graph created from the disjoint union of graphs $G_1$ and $G_2$ by merging vertices~$t_1$ and $s_2$, and renaming $s_1$ to $s$ and $t_2$ to $t$. In the following, we treat series-parallel graphs as directed graphs by directing every edge in the orientation as it appears in any path from $s$ to $t$. This is well-defined since in a series-parallel graph, there is a global order on the vertices such that every path only visits vertices in increasing order.

To show that full information revelation is optimal, we will show that $C$ is always concave in the belief $\mu$. In Lemma~\ref{lem:piecewise-linear}, we have shown that $C$ is affine on $P_A$ for all $A \in \mathcal{A}$, i.e., there are affine functions $C_A : \Delta(\Theta) \to \mathbb{R}_{\geq 0}$ such that $C_A(\mu) = C(\mu)$ for all $\mu \in P_A$. Intuitively, for $\mu \in P_A$, $C_A(\mu)$ is the unique solution $\pi_t$ to the system of equations \eqref{eq:wardrop-equation}. Let furthermore $x^*_A : \Delta(\Theta) \to \mathbb{R}^E$ be an affine function such that $x^*(\mu) = x^*_A(\mu)$ for all $\mu \in P_A$. Again for $\mu \in P_A$, $x^*_A$ is the unique solution $x$ to the system of equations \eqref{eq:wardrop-equation}.
It is important to note that while $x^*(\mu) = x^*_A(\mu)$ for all $\mu \in P_A$, the vector $x^*_A(\mu)$ will not be a Wardrop equilibrium or even not be a feasible flow at all when $\mu \in\Delta(\Theta) \setminus P_A$. The reason for this is that for $\mu \in \Delta(\Theta) \setminus P_A$ one of the inequalities in \eqref{eq:wardrop-inequality} is violated. If an inequality of type \eqref{eq:wardrop-inequality-1-text} is violated (but none of the other inequalities), the flow is feasible, but not a Wardrop equilibrium since there is an edge outside of the support that would decrease the cost. On the other hand, if an inequality of type \eqref{eq:wardrop-inequality-2-text} is violated, then the flow is not feasible.

In the following, we show that the pointwise minimum $\min_{A \in \mathcal{A}} C_A(\mu)$ of all Wardrop equilibria costs always corresponds to a feasible support.
More specifically, we show that when for some $\mu \in \Delta(\Theta)$, there is a support $A \in \mathcal{A}$ with $\mu \notin P_A$, then there is another support $A' \in \mathcal{A}$ with $C_{A'}(\mu) < C_{A}(\mu)$, or $C_{A'}(\mu) = C_{A}(\mu)$ and $\mu\in P_{A'}$, i.e., $A'$ is a feasible support. 
The proof works by induction on the number of edges where it exploits the structure of series-parallel graphs by distinguishing cases based on whether the last composition was serial or parallel.
For a serial composition, the analysis relies on the cost additivity over the two components, while for a parallel composition, we use the flow monotonicity as stated in \Cref{cor:monotone}.
The full proof is deferred to the appendix.

    \begin{restatable}{lemma}{lemsmallersupport}
    \label{lem:smaller-support}
        For a series-parallel graph, let $A \in \mathcal{A}$ and $\mu \in \Delta(\Theta) \setminus P_A$. Then, there is another support $A' \in \mathcal{A}$ with $C_{A'}(\mu) < C_{A}(\mu)$ or $C_{A'}(\mu) = C_A(\mu)$ and $\mu\in P_{A'}$, i.e., $A'$ is a feasible support.
    \end{restatable}

As a consequence of this lemma, we obtain that the pointwise minimum $\min_{A \in \mathcal{A}} C_A(\mu)$ is always realised by a flow that is a Wardrop equilibrium.
	
\begin{lemma}\label{lem:ward-equ-sepa}
We have $C(\mu) = \min_{A \in \mathcal{A}} C_A(\mu)$ for all $\mu \in \Delta(\Theta)$.
\end{lemma}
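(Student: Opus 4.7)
The plan is to establish the two inequalities $C(\mu)\ge\min_{A\in\mathcal{A}}C_A^*(\mu)$ and $C(\mu)\le\min_{A\in\mathcal{A}}C_A^*(\mu)$ separately, relying on Lemma~\ref{lem:piecewise-linear} for the first direction and on Lemma~\ref{lem:smaller-support} for the second.

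For the first direction, I would let $A^{\star}:=A(x^*(\mu))$ denote the set of active edges of the Wardrop equilibrium $x^*(\mu)$. Every vertex $v\in V$ is reached from $s$ by a shortest path whose edges are active (this is exactly the property discussed right before the definition of $\mathcal{A}$), so $(V,A^{\star})$ is connected and $A^{\star}\in\mathcal{A}$. Lemma~\ref{lem:piecewise-linear} then guarantees $\mu\in P_{A^{\star}}$ and $C(\mu)=C_{A^{\star}}^*(\mu)$, which yields $C(\mu)\ge\min_{A\in\mathcal{A}}C_A^*(\mu)$ since $A^{\star}$ is a feasible candidate in the minimisation.

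For the reverse direction, I would choose $\hat A\in\mathcal{A}$ that minimises $C_A^*(\mu)$ over $\mathcal{A}$ and, subject to this, has minimum cardinality; such a support exists because $\mathcal{A}$ is finite. The goal is to show $\mu\in P_{\hat A}$. Suppose for contradiction that $\mu\notin P_{\hat A}$. Then Lemma~\ref{lem:smaller-support} produces an $A'\in\mathcal{A}$ with either $C_{A'}^*(\mu)<C_{\hat A}^*(\mu)$, contradicting the cost-minimality of $\hat A$, or $C_{A'}^*(\mu)=C_{\hat A}^*(\mu)$ together with $|A'|<|\hat A|$, contradicting the cardinality-minimality of $\hat A$. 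Hence $\mu\in P_{\hat A}$, and Lemma~\ref{lem:piecewise-linear} gives $C_{\hat A}^*(\mu)=C(\mu)$, so $\min_{A\in\mathcal{A}}C_A^*(\mu)=C(\mu)$.

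The argument is essentially bookkeeping once Lemma~\ref{lem:smaller-support} is available: the conceptual content of ruling out ``ghost'' affine pieces $C_A^*$ that lie below the true equilibrium cost is packaged entirely into that lemma. The only real subtlety is the lexicographic tie-break between cost and cardinality when selecting $\hat A$; without this refinement, the equal-cost branch of Lemma~\ref{lem:smaller-support} could in principle cycle and the argument would fail. With the refined selection, the finiteness of $\mathcal{A}$ terminates the argument immediately.
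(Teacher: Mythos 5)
Your proposal is correct and follows essentially the same route as the paper: select the support minimizing $C_A^*(\mu)$ with ties broken by cardinality, and use Lemma~\ref{lem:smaller-support} to derive a contradiction if $\mu \notin P_A$. The only cosmetic difference is that you spell out the easy direction $C(\mu)\ge\min_{A\in\mathcal{A}}C_A^*(\mu)$ explicitly, which the paper leaves implicit since the minimizer is shown to be a feasible support whose affine value equals $C(\mu)$.
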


\begin{proof}
Let $\mu \in \Delta(\Theta)$ be arbitrary and let $A \in \mathcal{A}$ be such that $C_A(\mu) \leq C_{A'}(\mu)$ for all $A' \in \mathcal{A}$. As shown in Lemma~\ref{lem:smaller-support}, if $x^*_A$ violates one of the inequalities in \eqref{eq:wardrop-inequality}, 
then, there is another support $A'' \in \mathcal{A}$ with $C_{A''}(\mu) < C_{A}(\mu)$ or $C_{A''}(\mu) = C_A(\mu)$ and $A''$ is feasible.
Since $C_{A''}(\mu) < C_{A}(\mu)$ contradicts the choice of $A$, the support $A''$ must be feasible with $C_{A''}(\mu) = C_A(\mu)$.
\end{proof}
	
We obtain the main result of this section.	
	
	\begin{theorem}\label{thrm:sepa-full-opt}
		Full information revelation is an optimal signaling scheme for single-commodity congestion games on series-parallel graphs with affine costs and unknown offsets.
	\end{theorem}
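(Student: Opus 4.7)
The plan is to combine the two preceding lemmas to deduce that $C$ is a concave function on $\Delta(\Theta)$, and then invoke Jensen's inequality exactly as in Example~\ref{ex:concave} to conclude that no signaling scheme can do better than full information revelation.

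\textbf{Step 1: Concavity of $C$.} Lemma~\ref{lem:piecewise-linear} produces, for every support $A \in \mathcal{A}$, an affine function $C_A : \Delta(\Theta) \to \mathbb{R}$ such that $C_A(\mu) = C(\mu)$ whenever $\mu \in P_A$. Lemma~\ref{lem:ward-equ-sepa} strengthens this to the global identity $C(\mu) = \min_{A \in \mathcal{A}} C_A(\mu)$ for every $\mu \in \Delta(\Theta)$. Since the pointwise minimum of a finite family of affine functions is concave, $C$ is concave on $\Delta(\Theta)$. (No extra work is required here; it is a direct consequence of the two lemmas.)

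\textbf{Step 2: Jensen's inequality for an arbitrary scheme.} Fix an arbitrary signaling scheme $\phi$. For every issued signal $\sigma$, the posterior $\mu_\sigma$ satisfies $\mu_\sigma = \sum_{\theta \in \Theta} \mu_{\theta,\sigma}\, e_\theta$, where $e_\theta$ denotes the Dirac distribution on state $\theta$. Concavity of $C$ yields
\begin{align*}
C(\mu_\sigma) \;\geq\; \sum_{\theta \in \Theta} \mu_{\theta,\sigma}\, C(e_\theta).
\end{align*}
Multiplying by $\phi_\sigma$, summing over $\sigma$, and using the Bayes-plausibility identity $\sum_{\sigma} \phi_\sigma \mu_{\theta,\sigma} = \prior_\theta$ gives
\begin{align*}
\cost(\phi) \;=\; \sum_{\sigma \in \Sigma} \phi_\sigma C(\mu_\sigma) \;\geq\; \sum_{\theta \in \Theta} \prior_\theta\, C(e_\theta).
\end{align*}

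\textbf{Step 3: Identification with the full-information scheme.} The right-hand side is exactly $\cost(\phi^{\text{full}})$, where $\phi^{\text{full}}$ is the scheme that sends an exclusive signal $\sigma_\theta$ in each state $\theta$, producing posterior $\mu_{\sigma_\theta} = e_\theta$. Since $\phi$ was arbitrary, $\phi^{\text{full}}$ minimizes $\cost(\cdot)$ among all signaling schemes, proving the theorem.

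\textbf{Expected obstacles.} The only nontrivial content has already been discharged in Lemmas~\ref{lem:piecewise-linear} and~\ref{lem:ward-equ-sepa}; once concavity of $C$ is in hand, the remainder is a short computation that mirrors Example~\ref{ex:concave} verbatim, so I do not anticipate any additional technical difficulty for this theorem. The only point to be careful about is to sum over issued signals only (those with $\phi_\sigma > 0$) so that the posteriors $\mu_\sigma$ are well defined, but this is a notational matter rather than a substantive one.
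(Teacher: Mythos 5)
Your proposal is correct and follows essentially the same route as the paper: concavity of $C$ is obtained from Lemma~\ref{lem:ward-equ-sepa} (which exhibits $C$ as the pointwise minimum of the finitely many affine functions $C_A$ from Lemma~\ref{lem:piecewise-linear}), and the theorem then follows by Jensen's inequality applied to the convex decomposition of $\prior$ induced by any scheme. The paper's own proof merely states this more tersely, while you spell out the Bayes-plausibility computation explicitly; there is no substantive difference.
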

	
	\begin{proof}
A signaling scheme $\Phi$ is a convex decomposition of $\prior$ into distributions $\mu_\sigma \in \Delta(\Theta)$, and $\cost(\Phi)$ is a convex combination of $\cost(\mu_\sigma)$, i.e.,
\begin{align*}
C(\Phi) =   \sum_{\sigma \in \Sigma} \phi_{\sigma} \, C(\mu_{\sigma}).
\end{align*}
Since $C(\mu)$ is concave in $\mu$ the best convex decomposition of the prior occurs when $\mu_{\sigma_j} = \chi_{\theta_j}$ for all $j \in [|\Theta|]$, where $\chi_{\theta_j}$ is the indicator vector for state $\theta_j$.
    \end{proof}
	
	To show that this characterization is indeed tight, we first give an example of a network that is not series-parallel and where full information revelation is not optimal.
		
	\begin{example}\label{example:braess}
		Consider the graph shown in Figures~\ref{fig:braess1} and \ref{fig:braess2}. The cost functions are
		\begin{align*}
			c_e^{\theta_j}(x)=
			\begin{cases}
				x& \text{if }e\in \{e_1,e_4\},\, j=1,2,\\
				1& \text{if }e\in \{e_2,e_3\},\, j=1,2,\\
				0& \text{if }e=e_5,\, j=1,\text{ and }\\
				1& \text{if }e=e_5,\, j=2.
			\end{cases}
		\end{align*} 
		Note that only the cost of edge $e_5$ changes in the different states. 		
		We claim that full information revelation is suboptimal in this example.
		Consider the two different supports
		$A_1=\{e_1,e_2,e_3,e_4\}$ and $A_2=\{e_1,e_2,e_3,e_4,e_5\}$.
		For support $A_1$, we obtain the Wardrop equilibrium $x_{e_1} = x_{e_2} = x_{e_3} = x_{e_4} = 1/2$ for all $\mu \in \Delta(\Theta)$ and for support $A_2$ we obtain the flow $x_{e_1} = x_{e_4} = 1 - \mu_{\theta_2}$, $x_{e_2} = x_{e_3} = \mu_{\theta_2}$, and $x_{e_5} = 1-2\mu_{\theta_2}$, for all  $\mu \in \Delta(\Theta)$.
		This yields the following cost for the supports:
		\begin{align*}
			\cost_{A_1}(\mu)
			&= 2 \cdot \frac{1}{2}\cdot \frac{1}{2} + 2\cdot \frac{1}{2}		= \frac{3}{2},\\
			\cost_{A_2}(\mu)
			&= 2 \cdot (1-\mu_{\theta_2}) \cdot (1-\mu_{\theta_2})
			+ 2\cdot \mu_{\theta_2} + (1- 2\mu_{\theta_2})\cdot \mu_{\theta_2}
			=2-\mu_{\theta_2}.
		\end{align*}
		However, for any $\mu_{\theta_2}<1/2$, the flow defined by support $A_1$ is not a Wardrop equilibrium for the whole graph, since edge $e_5$ violates an inequality of type \eqref{eq:wardrop-inequality-1-text}.
		Furthermore, for any $\mu_{\theta_2}>1/2$, the flow defined by support $A_2$ is not a Wardrop equilibrium for the whole graph since the edge $e_5$ violates an inequality of type \eqref{eq:wardrop-inequality-2-text}.
		Hence, the pointwise minimum is not a concave function, as illustrated in \Cref{fig:braess-cost}. For a prior of $\mu^* = (1/2,1/2)$, full information revelation would yield a cost of $1/2\cdot 2 + 1/2 \cdot 3/2 = 7/4$. In contrast, sending a single signal in all states does not reveal any information about the state, so that the Wardop equilibrium $x^*(\mu^*)$ with cost of $3/2$ emerges. Hence, full information revelation is not optimal.  \hfill $\blacksquare$
	\end{example}
	
\begin{figure}[t!]
\scriptsize
\begin{subfigure}[b]{0.3\textwidth}
\begin{center}
\begin{tikzpicture}[xscale=10/3,yscale=10/3]
\node[state,fill=Green,draw=Green,label=left:{$s$}] (s) at (-0.5,0.4) {}; 
\node[state,fill=Green,draw=Green,label=right:{$t$}] (t) at (0.5,0.4) {}; 
\node[state,fill=Green,draw=Green] (v1) at (0,0.8) {}; 
\node[state,fill=Green,draw=Green] (v2) at (0,0) {}; 
				
\path[->,Green] (s) edge node[below right] {$e_1$} node[above left] {\textcolor{black}{$x$}} (v1);
\path[->,Green] (s) edge node[above right] {$e_3$} node[below left] {\textcolor{black}{$1$}} (v2);
\path[->,Green] (v1) edge node[below left] {$e_2$} node[above right] {\textcolor{black}{$1$}} (t);
\path[->,Green] (v2) edge node[above left] {$e_4$} node[below right] {\textcolor{black}{$x$}} (t);
\path[->,Green] (v1) edge node[right] {$e_5$} node[left] {\textcolor{black}{$0$}}(v2);		
\end{tikzpicture}
\end{center}
\caption{}
\label{fig:braess1}
\end{subfigure}
\begin{subfigure}[b]{0.3\textwidth}
\begin{center}
\begin{tikzpicture}[xscale=10/3,yscale=10/3]
\node[state,fill=Red,draw=Red,label=left:{$s$}] (s) at (-0.5,0.4) {}; 
\node[state,fill=Red,draw=Red,label=right:{$t$}] (t) at (0.5,0.4) {}; 
\node[state,fill=Red,draw=Red] (v1) at (0,0.8) {}; 
\node[state,fill=Red,draw=Red] (v2) at (0,0) {}; 
				
\path[->,Red] (s) edge node[below right] {$e_1$} node[above left] {\textcolor{black}{$x$}} (v1);
\path[->,Red] (s) edge node[above right] {$e_3$} node[below left] {\textcolor{black}{$1$}} (v2);
\path[->,Red] (v1) edge node[below left] {$e_2$} node[above right] {\textcolor{black}{$1$}} (t);
\path[->,Red] (v2) edge node[above left] {$e_4$} node[below right] {\textcolor{black}{$x$}} (t);
\path[->,Red] (v1) edge node[right] {$e_5$} node[left] {\textcolor{black}{$1$}} (v2);		
\end{tikzpicture}
\end{center}
\caption{}
\label{fig:braess2}
\end{subfigure}
\begin{subfigure}[b]{0.39\textwidth}
\begin{center}
\begin{tikzpicture}[shorten > = 0.5pt]
\draw[axis,->] node[below left]{0} (-0.25,0) -- (4.5,0) node[right]{$\mu_{\theta_2}$};
\draw[axis,->] (0,-0.25) -- (0,2.5) node[above]{$C(\mu)$};
				
\draw (4,4pt) -- (4,-4pt) node[below]{1};
\draw (2,4pt) -- (2,-4pt) node[below]{1/2};
\draw (4pt,2) -- (-5pt,2) node[left]{2};
\draw (4pt,3/2) -- (-5pt,3/2) node[left]{3/2};
				
\draw[ultra thick,MidnightBlue] (0,2) -- node[above]{$C_{A_2}(\mu)$} (2,3/2);
\draw[ultra thick,MidnightBlue] (2,3/2) -- (4,1)[dashed];
\draw[ultra thick,CornflowerBlue] (0,3/2) --  (2,3/2)[dashed];
\draw[ultra thick,CornflowerBlue] (2,3/2) -- node[above] {$C_{A_1}(\mu)$} (4,3/2);
\fill[Green] (0,2) circle (3.0pt);
\fill[Green] (0,0) circle (3.0pt);
\draw[thick,Green] (0,2) -- (0,0);
\fill[Red] (4,3/2) circle (3.0pt);
\fill[Red] (4,0) circle (3.0pt);
\draw[thick,Red] (4,3/2) -- (4,0);
\end{tikzpicture}
\end{center}
\caption{}
\label{fig:braess-cost}
\end{subfigure}
\caption{\label{fig:braess}Braess network: (a) cost functions for state $\theta_1$; (b) cost functions for state $\theta_2$; (c) cost of a Wardrop equilibrium as a function of the belief, described by parameter $\mu_{\theta_2} = 1 - \mu_{\theta_1} \in [0,1]$, the costs are dashed where the supports are infeasible for the given $\mu_{\theta_2}$.}
\end{figure}

In general, a Braess network is not embedded in a two-terminal graph network $G$ if and only if $G$ is series-parallel ~\citep{Milchtaich06}. 
In particular, we can define cost functions that essentially reduce any non-series-parallel graph to \Cref{example:braess} which proves the main theorem of this section. See the appendix for the full proof.

\begin{restatable}{theorem}{thrmsepafulloptiff}
   \label{thrm:sepa-full-opt-iff}
    For a single-commodity network $G$ with affine costs and unknown offsets, full information revelation is always an optimal signaling scheme if and only if $G$ is series-parallel. 
\end{restatable}

%-------------------------------------------------------------------------------New Section---------------------------------------------------------------------------------	
\section{Computing Optimal Signaling Schemes}
\label{sec:LPs}

	In this section, we consider general multi-commodity network congestion games and affine costs with state-based offsets, and show how optimal signals for these networks can be computed with LP techniques. Towards this end, we first need to investigate the unique Wardrop equilibria for a fixed set of active edges for each commodity. We again use the term \emph{support} for a set of active edges.
 
\subsection{Optimal Signaling via Optimal Support Selection}
	\label{ssec:reduction}

Suppose we are given a set of~$k$ distinct support vectors for each commodity~$i$ denoted by $(\support_{i,1})_{i \in R},\ldots,(\support_{i,k})_{i \in R}$. Consider the set of signaling schemes $\Phi$ with the following properties: $\Phi$ sends $k$ signals (where for simplicity we assume $\sigma \in [k] = \{1,\ldots,k\}$), and each signal $\sigma \in [k]$ results in a Wardrop equilibrium $x_\sigma$ with support $\support_{i,\sigma}$, for every commodity $i \in R$. The main result in this section shows that we can efficiently optimize over this set of signaling schemes.

\begin{restatable}{theorem}{thmbigLP}
    \label{thm:bigLP}
	Given $k$ distinct support vectors $(\support_{i,\sigma})_{i \in R, \sigma \in [k]}$, there exists an LP that computes the best signaling scheme that induces Wardrop equilibria with supports $(\support_{i,\sigma})_{i \in R, \sigma \in [k]}$ in time polynomial in $|\Theta|$, $|E|$, $|R|$, and $k$.
\end{restatable}

\begin{proof}{Proof Sketch.}
We extend the Wardrop flow characterization for a single signal with vertex potentials and flow-conservation constraints to generalize the polytope $P_A$ described by~\eqref{eq:wardrop-equation-text} and \eqref{eq:wardrop-inequality-text} to multi-commodity network congestion games using the balance vector 
\[
        \beta_{v,i} = \begin{cases} \phantom{-}d_i & \text{ if } v = s_i,\\ -d_i & \text{ if } v = t_i, \text{ and} \\ \phantom{-}0 & \text{ otherwise,} \end{cases}
    \]
for each commodity $i \in R$ to define the system of inequalities
    \begin{align}
        \label{eq:wardrop-inequality-multi-text}
        \begin{aligned}
            \pi_{v,i,\sigma} + a_e x_{e,\sigma} + \sum_{\theta \in \Theta} \mu_{\theta,\sigma} b_e^{\theta}  &= \pi_{w,i,\sigma} && \text{ for all } e=(v,w) \in \support_{i,\sigma}, i \in R,\\
            \pi_{v,i,\sigma} + a_e x_{e,\sigma} + \sum_{\theta \in \Theta} \mu_{\theta,\sigma} b_e^{\theta} &\geq \pi_{w,i,\sigma} && \text{ for all } e=(v,w) \in E \setminus \support_{i,\sigma}, i \in R,\\
            \sum_{e \in \delta^+(v)} x_{e,i,\sigma} - \sum_{e \in \delta^-(v)} x_{e,i,\sigma} &= \beta_{v,i} && \text{ for all } v \in V, i \in R,\\
            x_{e,\sigma} &= \sum_{i \in R} x_{e,i,\sigma} && \text{ for all } e \in E,\\ 
            x_{e,i,\sigma} &\geq 0 && \text{ for all } e \in E, i \in R, \\
            \pi_{s_i,i,\sigma} &= 0.
        \end{aligned}
    \end{align}
Again, the conditions $x_{e,i\sigma} = 0$ for all $e \in E \setminus A_{i,\sigma}$ are implicitly fulfilled. These constraints are extended to incorporate conditional beliefs and signaling probabilities, leading to a reformulation where nonlinear constraints are eliminated by substituting $\mu_{\theta,\sigma} = \phi_{\theta,\sigma}/\phi_{\sigma}$ and introducing new variables $y_{e,i,\sigma} = x_{e,i,\sigma} \cdot \phi_{\sigma}$ and $\tau_{v,i,\sigma} = \pi_{v,i,\sigma} \cdot \phi_{\sigma}$. This transformation results in a new polytope that captures Wardrop equilibrium conditions under signaling and enables optimization over signaling schemes via a final linear program (LP) that minimizes total expected cost while maintaining flow conservation and belief consistency. This final LP is given by
\begin{equation}
		\label{eq:schemeSupportNetwork_text}
		\begin{array}{lrcll}
			\text{Min. } & \multicolumn{3}{l}{\D \sum_{\sigma \in [k]} \sum_{i \in R} d_i \cdot \tau_{t_i,i,\sigma} } \\
			\text{s.t. } &\D \tau_{v,i,\sigma} + a_e y_{e,\sigma} + \sum_{\theta \in \Theta} \phi_{\theta,\sigma} b_e^\theta & = & \tau_{w,i,\sigma} & \text{for all } e=(v,w) \in \support_{i,\sigma}, i \in R, \sigma \in [k], \\
			&\D \tau_{v,i,\sigma} + a_e y_{e,\sigma} + \sum_{\theta \in \Theta} \phi_{\theta,\sigma} b_e^\theta & \ge & \tau_{w,i,\sigma} & \text{for all } e=(v,w) \in E \setminus \support_{i,\sigma}, i \in R, \sigma \in [k],\\
	   	    &\D \sum_{e \in \delta^+(v)} y_{e,i,\sigma} - \sum_{e \in \delta^{-}(v)} y_{e,i,\sigma} &=& \beta_{i,v} \cdot \D \sum_{\theta \in \Theta} \phi_{\theta,\sigma} \phantom{~~~}& \text{for all } v \in V, i\in R,\\
			&y_{e,\sigma} &=& \D \sum_{i \in R} y_{e,i,\sigma} & \text{for all } e\in E, \sigma \in [k],\\
			&y_{e,i,\sigma} & \ge & 0  & \text{for all } e \in \support_i, i \in R, \sigma \in [k], \\
		    &\tau_{s_i,i,\sigma} & = & 0 & \text{for all } i \in R, \sigma \in [k], \\
			&\D \sum_{\sigma \in [k]} \phi_{\theta,\sigma} & = & \prior_\theta & \text{for all } \theta\in \Theta,\\
			&\phi_{\theta,\sigma} & \le & \prior_\theta & \text{for all } \theta\in \Theta, \sigma \in [k],\\
			&\phi_{\theta,\sigma} & \ge & 0  & \text{for all } \theta\in \Theta, \sigma \in [k].
		\end{array}
	\end{equation}
Since the number of constraints and variables is polynomial in~$|\Theta|$, $|E|$, $|R|$, and $k$, the LP can be solved efficiently. We refer to the appendix for the full proof. 
\end{proof}

	This reduces optimizing the signaling scheme to an optimal choice of support sets. Suppose for some optimal signaling scheme $\Phi^*$, we know (a superset of) the support combinations $(\support_{i,\sigma})_{i \in R}$ for all commodities in the Wardrop equilibrium resulting from each signal $\sigma \in \Sigma$ issued in $\Phi^*$. Then, we can recover $\Phi^*$ by solving LP~\eqref{eq:schemeSupportNetwork_text}.

    We proceed to inspect the conditions of optimal schemes $\Phi^*$ more thoroughly. Indeed, we can restrict ourselves to at most $k \le |\Theta|$ signals, and each signal $\sigma$ can be assumed to have a distinct support vector $(\support_{i,\sigma})_{i \in R}$.  
 
     \begin{proposition}
		There is an optimal signaling scheme $\Phi^*$ such that
		\begin{enumerate}
			\item at most $|\Theta|$ signals are issued in $\Phi^*$ and
			\item there is no pair of signals $\sigma \neq \sigma'$ that are both issued in $\Phi^*$ and $\support_{i,\sigma} \subseteq \support_{i,\sigma'}$ for each commodity $i \in R$. In particular, every signal $\sigma$ that is issued in $\Phi^*$ has a distinct support vector $(\support_{i,\sigma})_{i \in R}$.
		\end{enumerate}
	\end{proposition}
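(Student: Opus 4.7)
The plan is to show that any optimal scheme $\phi^*$ can be transformed into one with the desired structural properties without increasing expected cost. For Part~1, I would rely on a Carath\'eodory-type argument standard in the Bayesian persuasion literature. Any signaling scheme induces a convex decomposition $\prior = \sum_{\sigma} \phi_{\sigma} \mu_{\sigma}$ with $\phi_{\sigma} \ge 0$ and $\sum_{\sigma} \phi_{\sigma} = 1$, whose expected cost is $\sum_{\sigma} \phi_{\sigma} \cost(\mu_{\sigma})$. Viewing the optimization as a (formally infinite) linear program in the weights $\phi_\mu$ for all beliefs $\mu \in \Delta(\Theta)$ subject to the $|\Theta|$ equality constraints $\sum_{\mu} \phi_{\mu} \mu_{\theta} = \prior_{\theta}$, the constraint matrix has rank at most $|\Theta|$. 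Hence some optimal basic feasible solution has at most $|\Theta|$ positive variables, which corresponds to at most $|\Theta|$ issued signals.

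For Part~2, assume for contradiction that some optimal $\phi^*$ with the minimum possible number of issued signals still contains two signals $\sigma \neq \sigma'$ with $\support_{i,\sigma} \subseteq \support_{i,\sigma'}$ for all $i \in R$. If the supports coincide, $\support_{i,\sigma} = \support_{i,\sigma'} =: A_i$ for all $i$, both beliefs $\mu_{\sigma}, \mu_{\sigma'}$ lie in the common polytope $P_{A}$ with $A = (A_i)_{i \in R}$ furnished by the multi-commodity analogue of Lemma~\ref{lem:piecewise-linear}. Since $\cost$ is affine on $P_{A}$, merging the two signals into a single new signal with belief $(\phi_{\sigma} \mu_{\sigma} + \phi_{\sigma'} \mu_{\sigma'})/(\phi_{\sigma} + \phi_{\sigma'})$ and weight $\phi_{\sigma} + \phi_{\sigma'}$ preserves the expected cost while strictly reducing the number of signals, contradicting the minimality assumption.

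In the strictly nested case $\support_{i,\sigma} \subsetneq \support_{i,\sigma'}$ for some $i$, I would use a continuous mass-transfer perturbation: set $\phi_{\theta,\sigma}(t) = (1-t)\phi^*_{\theta,\sigma}$ and $\phi_{\theta,\sigma'}(t) = \phi^*_{\theta,\sigma'} + t\phi^*_{\theta,\sigma}$ for all $\theta$, leaving all other signals fixed. The belief $\mu_{\sigma}(t)$ stays constant, while $\mu_{\sigma'}(t)$ moves along the segment from $\mu^*_{\sigma'}$ toward $\mu^*_{\sigma}$. In a neighborhood of $t = 0$, $\mu_{\sigma'}(t)$ remains in the polytope $P_{A'}$ with $A' = (\support_{i,\sigma'})_{i \in R}$, so by affineness of $\cost$ on $P_{A'}$ the total expected cost $\cost(\phi(t))$ is affine in $t$. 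Optimality of $\phi^*$ forces the slope to vanish, so the cost stays constant on the maximal interval where $\mu_{\sigma'}(t) \in P_{A'}$ and $\phi_{\sigma}(t), \phi_{\sigma'}(t) \geq 0$. At an endpoint of this interval either a signal is exhausted (removing the nested pair) or $\mu_{\sigma'}(t)$ hits the boundary of $P_{A'}$ and the support of $\sigma'$ changes. Iterating this step together with the equal-support merge should eventually yield an optimal scheme free of nested support pairs.

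The main technical obstacle is ensuring that this iterative procedure terminates. Each step either strictly decreases the number of signals or alters at least one support tuple; since the signal count is bounded by $|\Theta|$ (by Part~1) and there are only finitely many polytopes $P_{A}$, a suitable lexicographic potential combining the signal count with a chain-ordering on the involved supports should yield termination. One also has to verify that along the perturbation the slackness conditions for edges outside $A'$ remain valid and that no new nested pair is silently introduced at a boundary crossing between polytopes, both of which should follow from the piecewise-affine structure of the Wardrop equilibrium in $\mu$ established earlier.
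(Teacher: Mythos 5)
Your Part~1 is the same Carath\'eodory/basic-solution argument the paper invokes, and your treatment of the \emph{equal-support} case is correct: both beliefs lie in the same polytope $P_A$, $\cost$ is affine there, so merging preserves the cost exactly. This part is, if anything, spelled out more explicitly than in the paper.

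The strictly nested case, however, contains a genuine gap: the step ``optimality of $\phi^*$ forces the slope to vanish.'' Your perturbation is feasible only for $t\ge 0$; for $t<0$ the entries $\phi_{\theta,\sigma'}+t\,\phi_{\theta,\sigma}$ may become negative and $\mu_{\sigma'}(t)$ may leave $P_{A'}$, and since $\cost(\mu)$ is \emph{not} concave in the multi-commodity setting you cannot upgrade the one-sided optimality condition (right derivative $\ge 0$) to ``derivative $=0$.'' Indeed, on the interval where $\mu_{\sigma'}(t)$ stays in $P_{A'}$ a direct computation gives $\tfrac{d}{dt}\cost(\phi(t))\big|_{t=0^+}=\phi_{\sigma}\bigl(\cost_{A'}(\mu_{\sigma})-\cost(\mu_{\sigma})\bigr)$, where $\cost_{A'}$ is the affine function of the piece containing $\mu_{\sigma'}$ extended to $\mu_{\sigma}$; nothing forces this to be zero, and if it is strictly positive your mass transfer strictly increases cost and the nested pair cannot be removed this way. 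A telling symptom is that your argument for this case never uses the hypothesis $\support_{i,\sigma}\subseteq\support_{i,\sigma'}$: as written, the same reasoning would let one merge \emph{any} two issued signals at no cost, which would imply that a single uninformative signal is always optimal---contradicted already by the paper's introductory examples. (A further, smaller issue: $\mu_{\sigma'}(t)$ need not remain in $P_{A'}$ for any $t>0$ when $\mu_{\sigma'}$ sits on the boundary of that piece, so even the local affineness in $t$ requires care.) The paper avoids the local perturbation altogether: it merges $\sigma$ into $\sigma'$ in one shot inside LP~\eqref{eq:schemeSupportNetwork}, adding the variables $\phi_{\theta,\sigma}$ and $y_{e,i,\sigma}$ to those of $\sigma'$, and uses the subset relation between the support vectors to argue that the summed variables remain feasible with unchanged objective value, so the merged scheme is again optimal; the nestedness is thus the crucial ingredient, and it is exactly the ingredient missing from your perturbation step. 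Your termination concerns are real but secondary---the proof already breaks at the zero-slope claim.
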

	
	\begin{proof}
	The first property is a direct consequence of Caratheodory's theorem applied in the context of signaling \citep[cf.][]{Dughmi19}.
	For the second property, consider an optimal signaling scheme $\Phi^*$ resulting from an optimal solution of LP~\eqref{eq:schemeSupportNetwork_text}. Suppose $\Phi^*$ issues two signals $\sigma, \sigma'$ such that $A_{i,\sigma'}\subseteq A_{i,\sigma}$ for all $i\in R$. Then, we can discard signal $\sigma'$ and define a new signaling scheme $\Phi'$ with $\phi'_{\theta,\sigma} = \phi_{\theta,\sigma} + \phi_{\theta,\sigma'}$ and $\phi'_{\theta,\sigma'} = 0$ for every $\theta \in \Theta$. All other probabilities are the same as in $\Phi^*$.
	Similarly, $y'_{e,i,\sigma'} = 0$ and $y'_{e,i,\sigma} = y_{e,i,\sigma} + y_{e,i,\sigma'}$, as well as $\tau'_{v,i,\sigma'} = 0$ and $\tau'_{v,i,\sigma} = \tau_{v,i,\sigma} + \tau_{v,i,\sigma'}$, for every $e \in E$, $v \in V,$ and $i \in R$. This results in a feasible solution of LP~\eqref{eq:schemeSupportNetwork_text} with the same objective function value, i.e., $\Phi'$ is also an optimal signaling scheme.
	
    \end{proof}

	\subsection{Support Enumeration}
        \label{ssec:supportEnum}
    
	The result in Theorem~\ref{thm:bigLP} shows that the main difficulty in finding an optimal signaling scheme is to determine an optimal collection of \emph{supports} used by the Wardrop equilibrium for the conditional beliefs in an optimal signaling scheme. If we have a polynomial-sized superset of the supports used by the optimal signaling scheme, the scheme can be computed by solving LP~\eqref{eq:schemeSupportNetwork_text}.
	
	Each support is a subset of the edges $E$. Even when we consider only a (symmetric) game with a single commodity, up to ${{2^{|E|}} \choose {|\Theta|}}$ many different collections for the supports of (at most) $|\Theta|$ signals can exist. This upper bound can grow by another exponential factor in the number $r$ of commodities.
 
	Let us first concentrate on the case of two states, i.e., $\Theta = \{\theta_1,\theta_2\}$. Instead of considering all subsets of edges, we systematically search through the supports resulting from all possible beliefs $\mu \in \Delta(\Theta)$.
 The cost function for an edge $e \in E$ is given by $c_e(x \mid \mu) = a_e \, x + b_e(\mu)$ with offsets $b_e(\mu) = (1-\mu_{\theta_2}) \, b_e^{\theta_1} + \mu_{\theta_2} b_e^{\theta_2}$.
 	
	\begin{proposition}
		The set of all supports of Wardrop equilibria for all $\mu \in \Delta(\Theta)$ in games with two states can be computed in output-polynomial time.
		\label{prop:polyTwo}
	\end{proposition}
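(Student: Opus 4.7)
The plan is to exploit the one-dimensional structure of $\Delta(\Theta) \cong [0,1]$ when $|\Theta|=2$. Parameterize $\mu$ by $\alpha = \mu_{\theta_1} \in [0,1]$. By the (multi-commodity generalization of) Lemma~\ref{lem:piecewise-linear}, for every support vector $A \in \mathcal{A}^r$, the set $P_A \subseteq [0,1]$ is a polytope in one dimension, i.e., a (possibly empty) closed interval, and the unique Wardrop equilibrium $x^*(\alpha)$ and corresponding vertex potentials $\pi(\alpha)$ are affine functions of $\alpha$ on $P_A$. Moreover, the intervals $P_A$ cover $[0,1]$ and have pairwise disjoint interiors, so they induce a finite partition of $[0,1]$ whose cells are in bijection with the distinct equilibrium supports.

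The enumeration algorithm sweeps $\alpha$ from $0$ to $1$. First, I would compute the Wardrop equilibrium for $\alpha = 0$ by solving the convex program from Proposition~\ref{pro:beckmann}, and read off the support vector $A^{(0)}$. Given a current support $A^{(j)}$ and starting parameter $\alpha_j$, the systems~\eqref{eq:wardrop-equation} and~\eqref{eq:wardrop-inequality} (applied per population) determine $x^*(\alpha)$ and $\pi(\alpha)$ as explicit affine functions of $\alpha$, valid as long as the inequalities~\eqref{eq:wardrop-inequality-1} and~\eqref{eq:wardrop-inequality-2} remain satisfied. I would then compute
\[
  \alpha_{j+1} \;=\; \sup \bigl\{ \alpha \in [\alpha_j, 1] : (x^*(\alpha), \pi(\alpha)) \text{ satisfies all inequalities of~\eqref{eq:wardrop-inequality}}\bigr\},
\]
which amounts to finding, for each inequality, the smallest $\alpha > \alpha_j$ at which it becomes tight (a linear calculation in $\alpha$) and taking the minimum. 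At $\alpha_{j+1}$, at least one inequality becomes tight: either a flow variable hits zero (an edge leaves the support) or an inactive edge becomes tight (an edge joins the support). Updating $A^{(j)}$ accordingly yields $A^{(j+1)}$, and the sweep continues until $\alpha_k = 1$. Each iteration produces a new support vector, so the total running time is polynomial in the number of distinct supports times the per-iteration cost of solving the affine system and scanning the $O(r \cdot |E|)$ inequalities, i.e., output-polynomial.

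Two points require care. First, I must ensure that consecutive iterations genuinely produce distinct support vectors and that no support is visited twice; this follows because the intervals $P_A$ have disjoint interiors and the sweep traverses them in order, so whenever $\alpha_{j+1} > \alpha_j$ we emit a new support, and the case $\alpha_{j+1} = \alpha_j$ (where the interval is a single point and multiple constraints become tight simultaneously) can be handled by a lexicographic tie-breaking rule on the constraints that activate, guaranteeing progress after finitely many zero-length steps (bounded by the number of constraints). Second, if $A^{(j)}$ lies on a lower-dimensional face of $\mathcal{A}^r$ where the equilibrium system is degenerate, I perturb infinitesimally to the right of $\alpha_j$ to identify the full-dimensional support actually taken on the next interval; equivalently, I break ties by minimizing the support among the equilibria satisfying the binding constraints.

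The main obstacle is handling these degenerate transitions cleanly, where several inequalities become tight at the same $\alpha_{j+1}$ and the support jumps by more than one edge at once. The payoff of the one-dimensional setting is that the set of tight constraints can be read off directly from the affine formulas, so the bookkeeping reduces to a deterministic local update on $A^{(j)}$. Because every iteration does a polynomial amount of work and emits a previously unseen support, the total running time is polynomial in the input and the output size, establishing output-polynomial enumeration.
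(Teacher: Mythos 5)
Your proposal shares the paper's structural insight---for two states each support occupies a (possibly empty) interval of $\alpha$, these intervals tile $[0,1]$, and hence one can enumerate supports interval by interval in output-polynomial time---but the algorithm you build on top of it has a gap at its crucial step, the transition between intervals. You assume that at a breakpoint $\alpha_{j+1}$ the next support is obtained by a ``deterministic local update'': toggle the edges whose inequalities in~\eqref{eq:wardrop-inequality} became tight. This is not justified. At a breakpoint the equilibrium is typically degenerate (an edge may carry zero flow while its potential inequality is tight, and several constraints may bind simultaneously), and the support valid on the open interval to the right need not be the current support with the newly tight constraints flipped; adding a newly tight edge can immediately force other flows negative, and removing an edge whose flow hit zero may be accompanied by further changes. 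Your two proposed remedies are not made algorithmic: ``perturb infinitesimally to the right'' begs the question of how to compute the equilibrium there (you do not know the length of the next interval, so you cannot simply evaluate at a nearby rational point), and a ``lexicographic tie-breaking rule on the constraints that activate'' is exactly the part that would need a proof of correctness. Making a parametric sweep of this kind rigorous requires a direction-finding step at each breakpoint (essentially the machinery of Klimm and Warode~\cite{KlimmW21} for parametric equilibrium computation), which you neither carry out nor cite; as written, the claim ``each iteration emits a previously unseen support'' rests on an unproved update rule.

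The paper's proof avoids this difficulty entirely and is worth comparing against: it draws $\alpha$ uniformly at random (so that, with probability $1$, $\alpha$ is not a breakpoint and the support there is non-degenerate), computes the equilibrium and its support at that single point, and then recovers the \emph{entire} interval on which this support is valid by solving LP~\eqref{eq:distSupportNetwork} twice, maximizing and minimizing $\mu_{\theta_1}$ over the polytope of distributions consistent with that support. It then recurses on the two residual subintervals; since each support's region is a single interval, every recursive call returns a distinct support, and each call costs one equilibrium computation plus two LPs, giving output-polynomial time. If you want to salvage your sweep, the cleanest fix is to replace your local update by exactly this device: after computing $\alpha_{j+1}$ via the LP, determine the support on the next interval by sampling a fresh random point in $(\alpha_{j+1},1]$, computing its support and its interval via the LP, and recursing on whatever gap remains---which is precisely the paper's algorithm.
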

	
	\begin{proof}
        We describe a routine \textsc{ComputeSupport}$(a,b)$ for $0 \le a < b \le 1$. It considers a subinterval $[a,b] \subseteq [0,1]$ and recursively computes the supports of Wardrop equilibria for all distributions $\mu_\alpha = (1-\alpha,\alpha)$ for $\alpha \in [a,b]$ as follows. The routine first draws a value $\alpha \sim U(a,b)$ uniformly at random from (the interior of) the interval. Then, we compute the Wardrop equilibrium $x$ for belief $\mu_{\alpha}$ in polynomial time. This determines the cost $c_e(x_e \mid \mu_\alpha)$ for every edge $e \in E$. Using a shortest-path computation starting from $s_i$, we can identify the active edges for commodity $i\ \in R$, and, hence, the corresponding support $(\support_i)_{i \in R}$ of $x$ in polynomial time. For this support, we consider the LP composed of constraints \eqref{eq:wardrop-inequality-multi-text} as well as the standard distributional constraints 
        \begin{equation}
           \label{eq:2dist}
            \mu_{\theta_1,\sigma} = 1-\mu_{\theta_2,\sigma} \in [a,b].
        \end{equation}
        We solve this LP twice: once with the objective of maximizing $\mu_{\theta_1}$ and once with the objective of minimizing $\mu_{\theta_1}$. By non-degeneracy (with probability 1), we obtain values $\alpha_1 > \alpha_2$, such that supports $\support_i$ are used by the Wardrop equilibrium for all distributions $\alpha \in [\alpha_1, \alpha_2]$. 
        
        In this way, we obtain three subintervals: $[a,\max\{a,\alpha_1\}]$, $[\max\{a,\alpha_1\}, \min\{\alpha_2,b\}]$ and $[\min\{\alpha_2,b\},b]$. In the middle interval, the supports $\support_i$ are used. We compute the supports in the other two intervals if they are non-degenerate, i.e., we call \textsc{ComputeSupport}$(a,\alpha_1)$ if $\alpha_1 > a$, and \textsc{ComputeSupport}$(\alpha_2,b)$ if $\alpha_2 < b$. Since all distributions with the same support form a convex set, there is at most one consecutive subinterval of $[0,1]$ corresponding to each support. As such, each call of \textsc{ComputeSupport} will generate at least one additional support.
    \end{proof}
	
	The proposition shows that for every game where the Wardrop equilibria for beliefs in $\Delta(\Theta)$ use at most a polynomial number of different supports, we can compute these supports and, thus, an optimal signaling scheme in polynomial time. 
 
    However, this property is not always fulfilled. In particular, even in a symmetric game with a single commodity $R = \{1\}$ and two states that differ only in the offset of a single edge, an exponential number of supports may arise. 
    %exponential in $|\mathcal{A}|$.
    The proof uses a class of games on a slight variation of the family of nested Braess graphs as defined in \cite{KlimmW22}. It is deferred to the appendix.

\begin{restatable}{theorem}{thmexpoentialsupports}
\label{thm:expoential-supports}
For every number $n \in \NN$, there is a symmetric network congestion game with $|\Theta| = 2$ states, $O(n)$ vertices, $O(n)$ edges, and $O(n)$ source-target paths, in which $2^{n+1}-1$ %$2^{\Theta(n)}$ 
        different supports arise in the Wardrop equilibria for all $\mu \in \Delta(\Theta)$.
\end{restatable}

\subsection{Support Enumeration for Parallel Edges}

In the following, we will work towards showing that for network congestion games on $m$ \emph{parallel} edges only a polynomial number of supports arise.
A consequence of Theorem~\ref{thrm:sepa-full-opt-iff} is that for a \emph{single-commodity} instance on a parallel edges network, full information revelation is always optimal. In contrast to this result, we show that for two commodities on a parallel edges network full information revelation need not be optimal.

\begin{example}
\label{ex:counter-multi-commodity}
There are two vertices $V=\{s,t\}$, two parallel edges $E = \{e_1,e_2\}$, and two commodities $R=\{1,2\}$. Each commodity has volume $d_1 = d_2 = 1/2$. Commodity~1 is restricted to $\calS_1 = \{e_1\}$. Commodity~2 can route on both edges, i.e., $\calS_2 = \{e_1,e_2\}$. There are two states $\theta_1$ and $\theta_2$. The cost functions are $c_{e_1}^{\theta_1}(x) = x$ and $c_{e_2}^{\theta_1}(x) = 1$, as well as $c_{e_1}^{\theta_2}(x) = x+1$ and $c_{e_2}^{\theta_2}(x) = 1/2$; see Figures~\ref{fig:counterexample1a} and \ref{fig:counterexample1b}.
The prior is $\prior_{\theta_1} = \prior_{\theta_2} = 1/2$.

Consider a belief $\mu = (\mu_{\theta_1}, \mu_{\theta_2})$ and denote by $x_{e,i}(\mu)$ the flow of commodity~$i$ on edge~$e$ as a function of $\mu$.
Simple computations show that
\begin{align*}
x_{e_1,1}(\mu) &= \frac{1}{2}, &
x_{e_1,2}(\mu) &= \max \biggl\{ \frac{1}{2} - \frac{3}{2} \mu_{\theta_2},0 \biggr\},  &
x_{e_2,2}(\mu) &= \min \biggl\{\frac{3}{2} \mu_{\theta_2}, \frac{1}{2}\biggr\}.
\end{align*}
The corresponding cost of the Wardrop equilibrium is
\begin{align*}
 C(\mu) = 
 \begin{cases}
- \frac{1}{2}\mu_{\theta_2}^2  + 1 & \text{ if } \mu_{\theta_2} \leq \frac{1}{3}, \\
\frac{1}{4}\mu_{\theta_2} + \frac{3}{4} & \text { if } \mu_{\theta_2} > \frac{1}{3},
 \end{cases}
\end{align*}
and is depicted in Figure~\ref{fig:counterexample1c}.
For the full information revelation signaling scheme $\Phi_{\text{FI}}$ we obtain $C(\Phi_{\text{FI}}) = 1$.
For the signaling scheme $\Phi_{\text{NO}}$ where the same signal is always issued, we obtain $C(\Phi_{\text{NO}}) = C(1/2) = 7/8 < 1$. Hence, full information revelation need not be optimal for networks of parallel edges with multiple commodities. \hfill $\blacksquare$
\end{example}

 \begin{figure}
 \scriptsize
 \begin{center}
\begin{subfigure}[b]{0.25\textwidth}
\begin{center}
\begin{tikzpicture}[scale = 0.7]
\useasboundingbox (0,-2.5) rectangle (4,2); 
\node[state,draw=Green,fill=Green,label=left:{$s$}] (s) at (0,0) [circle] {}; 
\node[state,draw=Green,fill=Green,label=right:{$t$}] (t) at (4,0) [circle] {}; 
				
\path[->,Green] (s) edge[bend left=60]  node[above,yshift=0mm] {\textcolor{black}{$x$}} node[below] {\textcolor{black}{$r \in \{1,2\}$}} (t); 
\path[->,Green] (s) edge[bend right=60]  node[below,yshift=0mm] {\textcolor{black}{$1$}} node[above] {\textcolor{black}{$r = 2$}} (t); 
\end{tikzpicture}
\end{center}
\caption{\label{fig:counterexample1a}}
\end{subfigure}
\begin{subfigure}[b]{0.25\textwidth}
\begin{center}
\begin{tikzpicture}[scale=0.7]
\useasboundingbox (0,-2.5) rectangle (4,2); 
\node[state,draw=Red,fill=Red,label=left:{$s$}] (s) at (0,0) [circle] {}; 
\node[state,draw=Red,fill=Red,label=right:{$t$}] (t) at (4,0) [circle] {}; 
				
\path[->,Red] (s) edge[bend left=60]  node[above,yshift=0mm] {\textcolor{black}{$x+1$}} node[below] {\textcolor{black}{$r \in \{1,2\}$}} (t); 
\path[->,Red] (s) edge[bend right=60]  node[below,yshift=0mm] {\textcolor{black}{$1/2$}} node[above] {\textcolor{black}{$r = 2$}} (t); 
\end{tikzpicture}
\end{center}
\caption{\label{fig:counterexample1b}}
\end{subfigure}
\begin{subfigure}[b]{0.44\textwidth}
\begin{center}
\begin{tikzpicture}[xscale=4,yscale=2]
\draw[axis,->] node[below left]{0} (-1pt,0) -- (1.125,0) node[right]{$\mu_{\theta_2}$};
\draw[axis,->] (0,-2pt) -- (0,1.25) node[above]{$C(\mu)$};
				
\draw (1,2pt) -- (1,-2pt) node[below]{1};
\draw (1/3,2pt) -- (1/3,-2pt) node[below]{1/3};
%\draw (0.5,2pt) -- (0.5,-2pt) node[below]{1/2};
\draw (1pt,1) -- (-1pt,1) node[left]{1};
\draw (1pt,7/8) -- (-1pt,7/8) node[left]{7/8};
				
\draw[ultra thick,MidnightBlue,smooth,domain=0:1/3,variable=\x] plot({\x},{-1/2*\x + 1});
%\draw[ultra thick,MidnightBlue,smooth,domain=0.33:1,variable=\x,dashed] plot({\x},{-1/2*\x + 1});
\draw[ultra thick,CornflowerBlue,smooth,domain=1/3:1,variable=\x] plot({\x},{1/4*\x+3/4});
%\draw[ultra thick,CornflowerBlue,smooth,domain=0:0.33,variable=\x,dashed] plot({\x},{1/4*\x+3/4});
\node[circle,draw=none,fill=Green,inner sep=0pt, minimum size=6pt] (cg1) at (0,1) {};
\node[circle,draw=none,fill=Green,inner sep=0pt, minimum size=6pt] (cg2) at (0,0) {};
\draw[thick,Green] (cg1) -- (cg2);
\node[circle,draw=none,fill=Red,inner sep=0pt,minimum size=6pt] (cr1) at (1,1) {};
\node[circle,draw=none,fill=Red,inner sep=0pt,minimum size=6pt] (cr2) at (1,0) {};
\draw[thick,Red] (cr1) -- (cr2);
\end{tikzpicture}
\end{center}
\caption{\label{fig:counterexample1c}}
\end{subfigure}
\end{center}
\caption{Example of an instance on parallel edges with two commodities where full information revelation is suboptimal.
\label{fig:counterexample1}}
 \end{figure}

Having thus established that the design of the optimal signaling scheme is non-trivial for parallel edges networks with more than one commodity, we apply our support enumeration approach to compute an optimal signaling scheme in this setting.
More specifically, we obtain a polynomial-time algorithm when the number of commodities $|R|$ is constant.
In the following, for $\mu \in \Delta(\Theta)$ and $e \in E$, let $b_e(\mu) = \sum_{\theta \in \Theta} \mu_\theta b_e^{\theta}$.
Given $\mu \in \Delta(\Theta)$, we call the permutation that orders the edges non-decreasingly in $b_e(\mu)$ the \emph{offset ordering}.

We first show that as long as the offset ordering of the edges does not change and the number of commodities is constant, there is at most a polynomial number of supports.

\begin{lemma}
\label{lem:constantPopulations}
Consider a network congestion game with $m$ parallel edges and $r=|R|$ commodities. Let $P \subseteq \Delta(\Theta)$ be such that $b_{e_1}(\mu) \leq b_{e_2}(\mu) \leq \dots \leq b_{e_m}(\mu)$ for all $\mu \in P$. Then, there are $O(m^{(2^r-1)} \cdot 2^{r(2^{r-1})})$ different combinations of supports of Wardrop equilibria for all $\mu \in P$.
\end{lemma}

\begin{proof}
First, we classify each edge $e \in E$ based on the subset of commodities $\{ i \in R \mid e \in \mathcal{S}_i\}$ that have access to $e$. Specifically, for $C \subseteq R$ with $C \neq \emptyset$, let $E_C = \{e \in E \mid \{i \in R : e \in \mathcal{S}_i\} = C\}$ be the class of edges available to the subset $C$ of commodities.
This yields $2^r-1$ classes of edges.
For a given class $E_C = \{e_{k_1},e_{k_2},\ldots\}$, there is $k \in \{1,\dots,m\}$ such that $x_{e_{k_j}} > 0$ if and only if $k_j \leq k$
since the edges are sorted by expected offset for all $\mu \in P$.
Hence, within each class, there are at most $m$ possible supports. We conclude that there are at most $m^{(2^r-1)}$ different combinations of supports over all classes. 
		
Given such a combination of supports for all classes of edges, we still have to specify which subset of edges is used by commodity $i$, for every $i=1,\ldots,r$.
Since we fix the support in a class $E_C$, all edges of $E_C$ that are in the support must have the same cost in the Wardrop equilibrium. Hence, for each commodity that uses an edge from class $E_C$ in its individual support, we can assume that it uses all edges of the fixed support of $E_C$ in its individual support.
Therefore, for each commodity $i$, we only have to specify, which \emph{classes} of edges are used by this commodity. There are at most $2^{r-1}$ classes available to commodity~$i$, i.e., a total of $2^{(2^{r-1})}$ possibilities for commodity~$i$ and $(2^{(2^{r-1})})^r = 2^{r(2^{r-1})}$ possibilities in total. Overall, this yields $O(m^{(2^r-1)} \cdot 2^{r(2^{r-1})})$ different supports, as claimed.		
\end{proof}

Using bounds for the number of cells of hyperplane arrangements allows us to bound the number of different offset orderings that can appear in an instance on $m$ parallel edges with $k$ states. We then obtain the following result.
	
\begin{lemma}
\label{lem:par-edges-poly}
For a network congestion game with $m$ parallel edges, $k=|\Theta|$ states, and $r=|R|$ commodities, there are at most $O(m^{2k+(2^r)-1} \cdot 2^{r(2^{r-1})})$ different supports of Wardrop equilibria for all $\mu \in \Delta(\Theta)$.
\end{lemma}
	
\begin{proof}
In light of Lemma~\ref{lem:constantPopulations}, it suffices to bound the number of different offset orderings by $O(m^{2k})$. 
To this end, consider the simplex of $\Delta(\Theta)$, i.e., the space of all distributions over $\Theta$ as a subset of $\R^k$.
Let $(e_1,e_1'),\dots,(e_l,e_l')$ with $l = \frac{m(m-1)}{2}$ be an arbitrary but fixed order of all pairs of edges where we have $e_i \neq e_i'$ for all $i \in\{1,\dots,l\}$ and $\{e_i,e_i'\} \neq \{e_j,e_j'\}$ for all $i,j \in \{1,\dots,l\}$ with $i \neq j$.
We consider the hyperplanes generated by the equation $b_{e_i}(\mu) - b_{e_i'}(\mu) = 0$ for all pairs of edges $i \in\{1,\dots,l\}$.
This yields a hyperplane arrangement of $l$ hyperplanes in $\R^k$.
For such a hyperplane arrangement and a belief $\mu \in \Delta(\Theta)$, let the sign vector $\chi(\mu) = (\chi_{1}(\mu),\dots,\chi_l(\mu)) \in \{-,0,+\}^l$ be defined as
\begin{align*}
 \chi_i(\mu) = \begin{cases}
  +   & \text{ if } b_{e_i}(\mu) - b_{e_i'}(\mu) > 0, \\
  0   & \text{ if } b_{e_i}(\mu) - b_{e_i'}(\mu) = 0, \text{ and} \\
   -   & \text{ if } b_{e_i}(\mu) - b_{e_i'}(\mu) < 0.
 \end{cases}
\end{align*}
Let $Z$ be the set of sign vectors $z$ without a zero entry with the additional property that there is a $\mu \in \Delta(\Theta)$ such that $\chi(\mu) = z$.
Note that if there exists some $i\in \{1,\dots,l\}$ such that $b_{e_i}(\mu) - b_{e_i'}(\mu) = 0$ for all $\mu \in \Delta(\Theta)$, then $\chi_i(\mu) = 0$ for all $\mu \in \Delta(\Theta)$ and $Z = \emptyset$. In that case, we set $\chi_i(\mu) = +$ instead.
A sign vector $z \in Z$ corresponds to a \emph{cell} of the hyperplane arrangement.
By Buck's formula \citep{Buck43}, we have $|Z| \leq \sum_{j=0}^k \binom{l}{j} \in O(l^k) = O(m^{2k})$.
    \end{proof}
	
\begin{theorem}
\label{thm:parallel-poly-time}
	For a network congestion game with $m$ parallel edges, $k=|\Theta|$ states, and $r=|R|$ commodities, where $k$ and $r$ are constant, an optimal signaling scheme can be computed in polynomial time.
\end{theorem}
\begin{proof}
As shown in the proof of \Cref{lem:par-edges-poly}, each offset ordering corresponds to a cell in a hyperplane arrangement with $|Z|\in O(m^{2k})$ cells. These cells can be enumerated with the reverse search algorithm \citep{AvisF96} in time linear in $l$, $k$, $|Z|$, and the time needed to solve a linear program with $k$ variables and $l-1$ inequalities, where $l=\frac{m(m-1)}{2}$.
In \Cref{lem:constantPopulations} we showed that for each offset ordering, there is at most $O(m^{(2^r-1)} \cdot 2^{r(2^{r-1})})$ different combinations of supports of Wardrop equilibria for all $\mu \in P$.
These supports can be enumerated explicitly.
Thus, we may use the LP formulation from \Cref{thm:bigLP} to obtain the claimed result.
\end{proof}

 \section{Computational Studies}
	\label{sec:compStudy}

The goal of the computational studies conducted in this section is to investigate i) whether instances of our model on realistic networks generate a small number of different supports in the Wardrop equilibrium, and ii) by how much public signaling can improve the total cost in these networks.

    \begin{table}[t]
    \footnotesize
    \begin{center}
     \caption{Networks considered in our computational studies.}
    \label{tab:cs_networks}
    \begin{tabular}{lccccc}\toprule
	Network name & Abbreviation & $|V|$ & $|E|$    & $|Z|$  & Demand \\\midrule
	Sioux Falls    & SF & \phantom{2}24 & \phantom{2}76 & 24  & 360,600  \\
	Eastern Massachusetts* & EM  & \phantom{2}74  & 258  & 74 & \phantom{3}65,576 \\
	Berlin-Friedrichshain & BF & 224 & 523  & 23  & \phantom{3}11,205 \\
	Berlin-Prenzlauer Berg-Center & BP  & 352  & 749 & 38 & \phantom{3}16,660 \\
	Berlin-Tiergarten & BT & 361 & 766  & 26  & \phantom{3}10,755  \\
	Berlin-Mitte-Center & BM & 398 & 871 & 36 & \phantom{3}11,482 \\\bottomrule
		*Highway-extract version   &  & & &  &  \\
    \end{tabular}
    \end{center}
    \end{table}

   We consider real-world networks for a single commodity and two possible states of nature $\Theta = \lbrace \theta_1, \theta_2 \rbrace$. Table~\ref{tab:cs_networks} shows the six different networks we examine.
   The network data is obtained from the GitHub repository of the \cite{CSData22}. The data set includes a model for each network, i.e., it specifies vertices $V$ and edges $E$ corresponding to crossings and roads in the real world, respectively.
   It also defines a partition of the vertices into \emph{zones} $Z$.
   The sizes of the networks range from rather small ones (SF, EM) to fairly large ones (BF, BP, BT, BM). The first two are frequently considered in the traffic assignment literature; the latter were used, e.g., by \citet{Jahn05}. 

    In addition, the data set provides information on traffic-related properties for each edge $e \in E$, such as the capacity $u_e$ and the free-flow travel time $t_e$ (i.e., the time needed to traverse the edge in the absence of congestion), and representative demands between pairs of zones. Note that for SF and EM each vertex corresponds to one zone. Originally, the data set is designed for computational studies of traffic assignment with multiple commodities and edge cost functions $c^\text{BPR}_e(x)$ as defined in the congestion model of the Bureau %of Public Roads
    \cite{BPR64},
	\begin{equation}
        \label{eq:cs_bpr}
    	c^\text{BPR}_e(x) = t_e  \left( 1 + \eta  \left( \frac{x}{u_e} \right)^\beta \right) \, ,
	\end{equation}
    where $\beta = 4$, and $\eta$ is a dimensionless parameter ($\eta = 0.15$ for SF and EM, $\eta = 1$ else).
    To obtain affine functions, we set $\beta=1$ and obtain the cost functions $c_e (x) = a_e  x + b_e$ with coefficients 
	\begin{equation}
	    \label{eq:cs_coeffs}
	    a_e = \eta \cdot \frac{t_e }{u_e} \qquad \text{and} \qquad b_e = t_e \, .
	\end{equation}
      In this way, we obtain slopes $a_e$ of orders of magnitude from $10^{-7}$ to $10^{-5}$ for SF and EM and from $10^{-4}$ to $10^{-2}$ for the remaining ones. The offsets $b_e$ are integers ranging from 0 to 60, depending on the instance.      
      The offsets in the original data are deterministic. We introduce a parameter $\tau \in [0,1]$ to control the fraction of edges with stochastic offsets.
      For a $\tau$-fraction of the edges, we set $b_e$ according to \eqref{eq:cs_coeffs} in either $\theta_1$ or $\theta_2$ uniformly at random.
      For the other state, $b_e$ is drawn uniformly at random from the interval $[0,1.25 \cdot \arg \max_{e \in E} \, t_e]$ to allow for deviations to both lower and higher values.
      For the remaining fraction of $1-\tau$ of edges, we keep $b^{\theta_2}_e = b^{\theta_1}_e$.
      For our single-commodity scenario, we set the demand such that it equals the total demand that is routed through the network for the multi-commodity scenario in the original data, as specified in Table~\ref{tab:cs_networks}.   

      In the following, we show results for $\tau = 0.75$. We perform 40 computations for each network with varying $(s,t)$-pairs. For each computation, the $(s,t)$-pair is drawn uniformly at random from the set of zones such that $s \not = t$ and no pair is chosen more than once. Thus, each computation is given one network and one $(s,t)$-pair. We call such a tuple an \emph{instance}.

     To determine the set of all supports of Wardrop equilibria for all $\mu_{\theta_2} \in [0,1]$, we apply the recursive approach described in the proof of Proposition~\ref{prop:polyTwo}, i.e., we recursively compute the support of the emerging Wardrop equilibrium at a mean value for $\mu_{\theta_2}$ (initially $\mu_{\theta_2} = 1/2$), and then solve the LP composed of the constraints in \eqref{eq:wardrop-inequality-multi-text} as well as \eqref{eq:2dist} twice: once to maximize $\mu_{\theta_2}$ and once to minimize $\mu_{\theta_2}$.
     Since for two states, two signals suffice, the optimal signaling scheme can be computed by enumerating over all pairs of breakpoints, i.e., the beliefs where the support of the Wardrop equilibrium changes.     
     The LPs are solved using the built-in solver of the SciPy package (v1.6.2) by \cite{2020SciPy-NMeth}. The flow assignments are computed by an implementation of the conjugate Frank-Wolfe algorithm \citep{FrankWolfe56, Daneva03} in Python (v3.8.10) based on the code of \cite{CSBasisCode22}. The computing platform is an Intel Core i5 based computer running at 3.47 GHz with 8 GB RAM.

    \begin{table}[t]
    \footnotesize
    \begin{center}
   \caption{Results for the set of all supports $\mathcal{A}_i$ and properties of $\mathrm{\cost(\mu)}$ for 40 instances for each network.}
    \label{tab:cs_supports}
    \begin{tabular}{l@{~~~~~~}ccccc}\toprule
	\begin{tabular}{@{}c@{}} Network \end{tabular} & \begin{tabular}{@{}c@{}}Average\\  of $|\mathcal{A}_i|$ \end{tabular} &  \begin{tabular}{@{}c@{}}Standard dev.\  \\ of $|\mathcal{A}_i|$ \end{tabular} & \begin{tabular}{@{}c@{}} Maximum\\  of $|\mathcal{A}_i|$ \end{tabular}  & \begin{tabular}{@{}c@{}} $\mathrm{\cost(\mu)}$ concave  \\ but not linear $[\%]$ \end{tabular}  & \begin{tabular}{@{}c@{}} $\mathrm{\cost(\mu)}$ \\ linear $[\%]$ \end{tabular}\\\midrule
	SF &  3.75 &	2.24 &	11 &	70 & 10 \\ %500-549, 0,1
	EM &  6.22 &	3.93 &	15 &	65 & \phantom{1}8  \\ %600-649, 2,3
	BF &  4.85 &	2.76 &	12 &	75 & \phantom{1}3  \\ %100-149, 5c,5d
	BP &  4.03 &	2.34 &	11 &	58 & 13   \\ % 200-249, 6b,6c
	BT &  6.72 &	3.59 &	15 &	53 & \phantom{1}3 \\ %900-949, 8,8b,9,9b
	BM &  4.35 &	2.49 &	\phantom{1}9 &	55 & 13  \\ %1000-1049 10,10b,11b
    \bottomrule 
    \end{tabular}
    \end{center}
    \end{table}

     For each network with instances $i=1, \ldots,40$ let $\mathcal{A}_i$ be the set of all (distinct) supports of $\cost(\mu)$. Table~\ref{tab:cs_supports} shows averaged results on the properties of $\mathcal{A}_i$. We point out that both the average and the maximum number of used supports turn out to be very small compared to the number of edges in each network, even though $\tau$ is rather high. It turns out that these quantities tend to increase as the parameter $\tau$ is raised from 0 to 1.  Moreover, the average standard deviation is small, as well. Therefore, these findings imply that computing the optimal signaling scheme in realistic network instances can be done efficiently by our approach.
     For complementary illustrations, we show the five supports that appear in an instance of SF for the Wardrop equilibrium as well as the optimal signaling scheme in Figure~\ref{fig:sf-example} in Appendix~\ref{app:illu_sf}.

    \begin{table}[t]
    \footnotesize
    \begin{center}
       \caption{Performance of full information revelation (FI), no-signaling (NO), and the optimal signaling scheme~(OPT) averaged over 40 instances for each network. }
    \label{tab:cs_costs}
    \begin{tabular}{l@{~~~~~~}cccc}\toprule
	\begin{tabular}{@{}c@{}} Network \end{tabular} & \begin{tabular}{@{}c@{}} FI is \\ optimal $[\%]$  \end{tabular}  & \begin{tabular}{@{}c@{}} cost(NO) \\\hline cost(OPT)  \end{tabular} & \begin{tabular}{@{}c@{}} cost(OPT) \\\hline cost(Pointwise SO)  \end{tabular} & \begin{tabular}{@{}c@{}} cost(NO) \\\hline cost(Pointwise SO)  \end{tabular}\\\midrule
	SF &  100 & 1.0050 &	1.0176 &	1.0227 \\
    EM &  100 &	1.0819 &	1.0229 &	1.1066 \\
    BF &  100 &	1.0141 &	1.0190 &	1.0334 \\
    BP &  \phantom{1}98 &	1.0028 &	1.0100 &	1.0129 \\
    BT &  \phantom{1}96 &	1.0165 &	1.0186 &	1.0354 \\
    BM &  100 &	1.0083 &	1.0170 &	1.0255 \\
    \bottomrule
    \end{tabular}
    \end{center}
    \end{table}

    For the second part of our study, we analyze the performance of full information revelation, no-signaling (i.e., revealing no information by sending the same signal for all states), and the optimal signaling scheme, as shown in Table~\ref{tab:cs_costs}. The results are rounded to four decimal places due to numerical precision.  We assume the prior $\prior_{\theta_1} = \prior_{\theta_2} = 0.5$ as a reference.
    The cost of no-signaling is equal to the cost of the Wardrop equilibrium for the prior.
    The cost of full information revelation is equal to the cost of the Wardrop equilibria for states $\theta_1$ and $\theta_2$ averaged by their respective prior probabilities.
    We also compare the costs of the signaling schemes to the following lower bound on the cost of any signaling scheme. The \emph{pointwise} social optimum (pointwise SO) is defined by the costs of the system optimal flows for states $\theta_1$ and $\theta_2$ averaged by their respective prior probabilities.
    
    One can see that in most cases full information revelation is optimal.
    Moreover, even when it is not optimal, the additional costs are not notable within the numerical precision applied.
    Since there is no network where $\cost(\mu)$ is convex but non-linear, no-signaling is only optimal in the rare cases that the Wardrop equilibrium is linear (see Table~\ref{tab:cs_supports}). This appears to be the case mainly when the source and target vertices are very close to each other.

    \Cref{tab:cs_supports} shows that a significant proportion of the networks induces non-concave cost functions which empirically confirms our characterization from \Cref{sec:sepa} as the underlying networks are non-series-parallel.
    The (approximate) optimality of full information revelation shown in \Cref{tab:cs_costs} suggest, however, that its optimality persists even in cases where the cost of the Wardrop equilibrium is not concave.
    Indeed, this observation is confirmed by our experiments for more general cost functions in the next section. Moreover, the results reveal that optimal signaling results in slight but consistent improvements over no-signaling. However, even with optimal information design there remains a notable gap to the average cost of a pointwise social optimal flow. The latter is related to the fact that due to the optimality of full information revelation, optimal signaling induces a Wardrop equilibrium pointwise for each state. Thus, comparing optimal signaling and the pointwise social optimal flow, we measure a variant of an ``expected price of anarchy''~\citep{HoeferS10}, which is in similar orders of magnitude as the (deterministic) price of anarchy in these networks~\citep{Jahn05}.

\section{Discussion}
\label{sec:discussion}
In this paper, we studied how the provision of information about the realization of the travel times in networks may help decrease the total travel times in the emerging Wardrop equilibria. 
Specifically, we showed that for a series-parallel network with a single commodity, it is always optimal to provide the full information about the realized states. 
We assume that cost functions are affine with state-based offsets, and it is natural to ask whether this result extends to further classes of cost functions. %(cf.\ our discussion in \Cref{ssec:significance}).
With the following examples, we illustrate that this is not the case.

\begin{figure}
\scriptsize
\begin{subfigure}[b]{0.3\textwidth}
\begin{center}
\begin{tikzpicture}[scale = 0.7]
\useasboundingbox (0,-2.5) rectangle (4,1);
\node[state,draw=Green,fill=Green,label=left:{$s$}] (s) at (0,0) [circle] {}; 
\node[state,draw=Green,fill=Green,label=right:{$t$}] (t) at (4,0) [circle] {}; 
				
\path[->,Green] (s) edge[bend left=60]  node[above,yshift=0mm] {\textcolor{black}{$1$}} (t); 
\path[->,Green] (s) edge[bend right=60]  node[below,yshift=0mm] {\textcolor{black}{$x$}} (t); 
\end{tikzpicture}
\end{center}
\caption{\label{fig:counterexample2a}}
\end{subfigure}
\begin{subfigure}[b]{0.3\textwidth}
\begin{center}
\begin{tikzpicture}[scale=0.7]
\useasboundingbox (0,-2.5) rectangle (4,1);
\node[state,draw=Red,fill=Red,label=left:{$s$}] (s) at (0,0) [circle] {}; 
\node[state,draw=Red,fill=Red,label=right:{$t$}] (t) at (4,0) [circle] {}; 
				
\path[->,Red] (s) edge[bend left=60]  node[above,yshift=0mm] {\textcolor{black}{$x$}} (t); 
\path[->,Red] (s) edge[bend right=60]  node[below,yshift=0mm] {\textcolor{black}{$2$}} (t); 
\end{tikzpicture}
\end{center}
\caption{\label{fig:counterexample2b}}
\end{subfigure}
\begin{subfigure}[b]{0.39\textwidth}
\begin{center}
\begin{tikzpicture}[xscale=4,yscale=2]
\draw[axis,->] node[below left]{0} (-1pt,0) -- (1.125,0) node[right]{$\mu_{\theta_2}$};
\draw[axis,->] (0,-2pt) -- (0,1.25) node[above]{$C(\mu)$};
				
\draw (1,2pt) -- (1,-2pt) node[below]{1};
\draw (0.50,2pt) -- (0.50,-2pt) node[below]{1/2};
\draw (1pt,1) -- (-1pt,1) node[left]{1};
				
\draw[ultra thick,MidnightBlue,smooth,domain=0:0.5,variable=\x] plot({\x},{2*\x*\x - \x + 1});
\draw[ultra thick,CornflowerBlue,smooth,domain=0.5:1,variable=\x] plot({\x},{1});
\node[circle,draw=none,fill=Green,inner sep=0pt, minimum size=6pt] (cg1) at (0,1) {};
\node[circle,draw=none,fill=Green,inner sep=0pt, minimum size=6pt] (cg2) at (0,0) {};
\draw[thick,Green] (cg1) -- (cg2);
\node[circle,draw=none,fill=Red,inner sep=0pt,minimum size=6pt] (cr1) at (1,1) {};
\node[circle,draw=none,fill=Red,inner sep=0pt,minimum size=6pt] (cr2) at (1,0) {};
\draw[thick,Red] (cr1) -- (cr2);
\node[circle,draw=none,fill=Dandelion,inner sep=0pt,minimum size=3pt] (cd1) at (0.29,0.88 ) {};
\node[circle,draw=none,fill=Dandelion,inner sep=0pt,minimum size=3pt] (cd2) at (1,1) {};
\draw[thick,Dandelion] (cd1) -- (cd2);
\end{tikzpicture}
\end{center}
\caption{\label{fig:counterexample2c}}
\end{subfigure}
\caption{A parallel edges network where full information revelation is suboptimal. The orange line in (c) indicates the optimal signaling scheme.}
\end{figure}
\begin{figure}
\scriptsize
\begin{subfigure}[b]{0.3\textwidth}
\begin{center}
\begin{tikzpicture}[scale = 0.7]
\useasboundingbox (0,-2.5) rectangle (4,1);
\node[state,draw=Green,fill=Green,label=left:{$s$}] (s) at (0,0) [circle] {}; 
\node[state,draw=Green,fill=Green,label=right:{$t$}] (t) at (4,0) [circle] {}; 
				
\path[->,Green] (s) edge[bend left=60]  node[above,yshift=0mm] {\textcolor{black}{$x^2$}} (t); 
\path[->,Green] (s) edge[bend right=60]  node[below,yshift=0mm] {\textcolor{black}{$x^2+1$}} (t); 
\end{tikzpicture}
\end{center}
\caption{\label{fig:counterexample3a}}
\end{subfigure}
\begin{subfigure}[b]{0.3\textwidth}
\begin{center}
\begin{tikzpicture}[scale=0.7]
\useasboundingbox (0,-2.5) rectangle (4,1);
\node[state,draw=Red,fill=Red,label=left:{$s$}] (s) at (0,0) [circle] {}; 
\node[state,draw=Red,fill=Red,label=right:{$t$}] (t) at (4,0) [circle] {}; 
				
\path[->,Red] (s) edge[bend left=60]  node[above,yshift=0mm] {\textcolor{black}{$x^2+1$}} (t); 
\path[->,Red] (s) edge[bend right=60]  node[below,yshift=0mm] {\textcolor{black}{$x^2$}} (t); 
\end{tikzpicture}
\end{center}
\caption{\label{fig:counterexample3b}}
\end{subfigure}
\begin{subfigure}[b]{0.39\textwidth}
\begin{center}
\begin{tikzpicture}[xscale=4,yscale=2]
\draw[axis,->] node[below left]{0} (-1pt,0) -- (1.125,0) node[right]{$\mu_{\theta_2}$};
\draw[axis,->] (0,-2pt) -- (0,1.25) node[above]{$C(\mu)$};
				
\draw (1,2pt) -- (1,-2pt) node[below]{1};
\draw (0.5,2pt) -- (0.5,-2pt) node[below]{1/2};
\draw (1pt,1) -- (-1pt,1) node[left]{1};
				
\draw[ultra thick,MidnightBlue,smooth,domain=0:1,variable=\x] plot({\x},{\x*\x - \x + 1});
\node[circle,draw=none,fill=Green,inner sep=0pt, minimum size=6pt] (cg1) at (0,1) {};
\node[circle,draw=none,fill=Green,inner sep=0pt, minimum size=6pt] (cg2) at (0,0) {};
\draw[thick,Green] (cg1) -- (cg2);
\node[circle,draw=none,fill=Red,inner sep=0pt,minimum size=6pt] (cr1) at (1,1) {};
\node[circle,draw=none,fill=Red,inner sep=0pt,minimum size=6pt] (cr2) at (1,0) {};
\draw[thick,Red] (cr1) -- (cr2);
\end{tikzpicture}
\end{center}
\caption{\label{fig:counterexample3c}}
\end{subfigure}
\caption{Another parallel edges network where full information is suboptimal; it is optimal to reveal no information.}
\end{figure}

 First, we show that full information revelation is not always optimal in parallel edges networks with affine full-state-based (AFS) cost functions.
	
\begin{example}
 \label{ex:counter-stochastic-slope}
	    There are two vertices $V=\{s,t\}$, two parallel edges $E = \{e_1,e_2\}$, and two states $\theta_1$ and $\theta_2$ which appear with prior probability $\prior_{\theta_1} = \prior_{\theta_2} = 1/2$.
        There is a single commodity with a demand of $d = 1$.
	    The cost functions are $c_{e_1}^{\theta_1}(x) = 1$ and $c_{e_2}^{\theta_1}(x) = x$, as well as $c_{e_1}^{\theta_2}(x) = x$ and $c_{e_2}^{\theta_2}(x) = 2$; see Figures~\ref{fig:counterexample2a} and \ref{fig:counterexample2b}.
It is straightforward to verify that for the unique Wardrop equilibrium as a function of the prior $\mu =(\mu_{\theta_1}, \mu_{\theta_2})$, we have
\begin{align*}
x_{e_1}(\mu) &= \min\bigl\{ 2\mu_{\theta_2}, 1\bigr\},  & 
x_{e_2}(\mu) &= \max\bigl\{ 1 - 2\mu_{\theta_2}, 0\bigr\}. 
\end{align*}
The resulting cost of the Wardrop equilibrium is
\begin{align*}
C(\mu) =
\begin{cases}
 2\mu_{\theta_2}^2 - \mu_{\theta_2} + 1 & \text{ if } \mu_{\theta_2} \leq 1/2, \\
 1 & \text{ otherwise;}
\end{cases}
\end{align*}
see Figure~\ref{fig:counterexample2c}.

As a result, the full information revelation signaling scheme $\Phi_{\text{FI}}$ yields a total cost of $C(\Phi_{\text{FI}}) = 1$.
We proceed to compute the optimal public signaling scheme $\Phi_{\text{OPT}}$.
The optimal signal has two signals $\sigma_1$ and $\sigma_2$.
A closer inspection of Figure~\ref{fig:counterexample2c} reveals that we want to choose the signals such that for signal $\sigma_2$ the posterior belief $\mu''$ with $\mu''_{\theta_2} = 1$ is induced and for signal $\sigma_1$ a posterior belief $\mu'$ with $\mu'_{\theta_2} \in (0,1/2)$ is induced.
To this end, let $p \in [0,1/2]$ be a probability to be determined later and consider the following signaling scheme:
\begin{align*}
\phi_{\theta_1,\sigma_1} &= 1/2, & \phi_{\theta_1,\sigma_2} &=  0,  \\
\phi_{\theta_2,\sigma_1} &= p, & \phi_{\theta_2,\sigma_2} &=  1/2-p.  
\end{align*}
When $\sigma_1$ is issued, the induced belief is $\mu' = (\mu'_{\theta_1},\mu'_{\theta_2})$ with $\mu'_{\theta_1} = \frac{1}{2} / (p+ \frac{1}{2})$ and $\mu'_{\theta_2} = p/(p+\frac{1}{2})$ leading to expected cost of
\begin{align*}
C(\mu') = 2 \biggl(\frac{p}{p+\frac{1}{2}}\biggr)^{\!2} - \frac{p}{p+\frac{1}{2}} + 1.
\end{align*}
If, on the other hand, signal $\sigma_2$ is issued, the induced belief is $\mu'' = (\mu''_{\theta_1}, \mu''_{\theta_2})$ with $\mu''_{\theta_1}=0$ and $\mu''_{\theta_2}=1$ with expected cost of $C(\mu'') = 1$.
Finally, multiplying the cost for each signal with the probability of each signal being issued yields the total cost of
\begin{align*}
C(\Phi_{\text{OPT}}) &= \biggl(p \!+\! \frac{1}{2}\biggr) C(\mu') + \biggl(\frac{1}{2} \!-\! p\biggr) C(\mu'') =
\biggl(p + \frac{1}{2}\biggr)\biggl[2 \biggl(\frac{p}{p+\frac{1}{2}}\biggr)^{\!2} - \frac{p}{p+\frac{1}{2}} + 1 \biggr]  + \biggl(\frac{1}{2} - p\biggr) 1.
\end{align*}
This expression is minimized for $p= \frac{1}{2}(\sqrt{2}-1) \approx 0.207$, and we obtain $C(\Phi_{\text{OPT}}) \approx 0.914<1$.
The optimal signaling scheme is indicated in Figure~\ref{fig:counterexample2c}.
In particular, full information revelation is not optimal. \hfill $\blacksquare$
\end{example}

The next example shows that full information revelation need not be optimal in parallel-edge networks with monomial cost functions with state-based offsets (MSO).
	
\begin{example}
\label{ex:counter-monomials}
There are two vertices $V=\{s,t\}$, two parallel edges $E = \{e_1,e_2\}$, and two states $\theta_1$ and $\theta_2$ with prior probability $\prior_{\theta_1} = \prior_{\theta_2} = 1/2$.
There is a single commodity with demand $d = 1$.
The cost functions are $c_{e_1}^{\theta_1}(x) = x^2$ and $c_{e_2}^{\theta_1}(x) = x^2+1$, as well as $c_{e_1}^{\theta_2}(x) = x^2+1$ and $c_{e_2}^{\theta_2}(x) = x^2$; see Figures~\ref{fig:counterexample3a} and \ref{fig:counterexample3b}.

It is straightforward to verify that for the unique Wardrop equilibrium as a function of the belief $\mu = (\mu_{\theta_1}, \mu_{\theta_2})$, we have
\begin{align*}
x_{e_1}(\mu) &= 1 - \mu_{\theta_2}, &   
x_{e_2}(\mu) &= \mu_{\theta_2}.
\end{align*}
The resulting cost of the Wardrop equilibrium is
\begin{align*}
C(\mu) = \mu_{\theta_2}^2 - \mu_{\theta_2} + 1;    
\end{align*}
see Figure~\ref{fig:counterexample3c}.

The full information revelation scheme $\Phi_{\text{FI}}$ has a total expected cost of $\cost(\Phi_{\text{FI}}) = 1$.
	    
From Figure~\ref{fig:counterexample3c}, it is easy to see that the optimal signaling scheme is the no-signaling scheme $\Phi_{\text{NO}}$ that always sends the same signal $\sigma_1$ and results in the commodity splitting the flow equally on edge $e_1$ and $e_2$ resulting in the total expected cost of $\cost(\Phi_{\text{NO}}) = 3/4$. Hence, full information revelation is not optimal.\hfill $\blacksquare$
	\end{example}

As full information revelation may be suboptimal both for AFS and MSO cost functions, it is interesting to analyze if an optimal signaling scheme can be computed in polynomial time. As mentioned above, our techniques do not translate since the costs as a function of the prior cease to be piecewise linear. Moreover, as Example~\ref{ex:counter-stochastic-slope} exhibits, the optimal signaling schemes may require irrational numbers for their description. We believe that the problem of efficiently computing (near)-optimal signals for these settings requires substantially different techniques and leave it as an interesting open problem.

We have further shown that for the case of two states, the optimal signaling scheme can be computed in time that is polynomial in the number of supports that appear in the Wardrop equilibria for all possible beliefs. While our results show that the number of supports may be exponential in the input size of the network, our computational studies exhibited that this parameter is rather low for realistic network instances.

Finally, we have proven that for games on parallel edges with a constant number of states and a constant number of commodities, a cell decomposition approach combined with support enumeration and linear programming techniques yields a polynomial-time algorithm to compute the optimal signaling scheme. It would be interesting to see whether this combination could be applied to further settings, e.g., series-parallel networks where both the number of states and the number of commodities are constant.

We point to the fact that our model considers all possible $s$--$t$ paths in the single-commodity setting. Therefore, the result by \citet{CominettiDS24}, who show that every congestion game is equivalent to routing game on series-parallel networks in which only a subset of paths is allowed, does not generalize our findings to a broader class of congestion games.

\paragraph*{Acknowledgements}
This work was supported by Deutsche Forschungsgemeinschaft EXC-2046/1 (project ID: 390685689) and Ho 3831/9-1 (project ID: 514505843). The authors thank the organizers and participants of Dagstuhl Seminar 18102 ``Dynamic Models in Transportation Science''.

\clearpage
	
\bibliographystyle{informs2014}
\bibliography{information-design}

\clearpage

\section*{Appendix}

\section{Deferred Proofs from \Cref{sec:structural}}
\label{app:structural}

\subsection{Proof of \Cref{lem:piecewise-linear}}
\label{app:lem:piecewise-linear}

\lempiecewiselinear*

\begin{proof}
Let $A\in\mathcal{A}$ be a support such that $P_A\neq \emptyset$ and let $x^*=x^*(\mu)$ for some $\mu\in P_A$ be a Wardrop equilibrium such that $A(x^*)=A$. We proceed to show that $x^*$ and $C$ are affine on $P_A$.
To this end, we define the balance vector $(\beta_v)_{v \in V}$ as
\begin{align*}
\beta_v &= \begin{cases}
 \phantom{-}d & \text{ if } v=s,\\
 -d & \text{ if } v=t,\\
 \phantom{-}0 & \text{ otherwise}
 \end{cases}
 && \text{ for all } v \in V.
\end{align*}
For $v \in V$, let $\delta^+(v)$ and $\delta^-(v)$ be the sets of all out- and ingoing edges of $v$, respectively. By Proposition~\ref{pro:beckmann}, the Wardrop equilibrium $x^*$ is the optimal solution to the optimization problem
\begin{align*}
\text{Min.} \quad &\sum_{e \in E} \int_{0}^{x_e} c_e(z \mid \mu) \;\text{d}z\\
\text{s.t. } \quad &\sum_{e \in \delta^+(v)} x_e - \sum_{e \in \delta^-(v)} x_e = \beta_v
&& \text{ for all } v \in V,\\
&x_e \geq 0 &&\text{ for all } e \in E. 
\end{align*}
By the Karush-Kuhn-Tucker optimality conditions \citep[cf.][Theorems~3.25 and 3.27]{Ruszczyski06}, a flow $x = (x_e)_{e \in E}$ is optimal if and only if it is feasible and there is a dual vector $\pi = (\pi_v)_{v \in V}$ such that

\begin{subequations}
\begin{align}
c_e(x_e \mid \mu) &= \pi_w - \pi_v &&\text{ for all } e = (v,w) \in E \text{ with } x_e \neq 0, \label{eq:def-KKT-1}\\
c_e(x_e \mid \mu) &\geq \pi_w - \pi_v &&\text{ for all } e = (v,w) \in E \text{ with } x_e = 0. \label{eq:def-KKT-2}
\end{align}
\end{subequations}
We claim that in particular the shortest path potential $\psi$ satisfies these conditions. To see this, note that the shortest path potentials fulfill \eqref{eq:def-KKT-2} by definition. Furthermore, any edge $e\in E$ with $x_e\neq 0$ must lie on a shortest $s$-$t$-path and hence \eqref{eq:def-KKT-1} is satisfied as well.
Thus, we may assume that $\pi=\psi$ holds without loss of generality.
Using $c_e(x_e \mid \mu) = a_e x_e + \sum_{\theta \in \Theta} \mu_{\theta} b_e^\theta$, we conclude that a Wardrop equilibrium $x=x(\mu)$ with $\mu\in P_A$ satisfies the following equations
\begin{subequations}
\label{eq:wardrop-equation}
\begin{align}
\pi_v + a_e x_e + \sum_{\theta \in \Theta} \mu_{\theta} b_e^{\theta}  &= \pi_w && \text{ for all } e \in A, \label{eq:wardrop-equation-1}\\
\sum_{e \in \delta^+(v)} x_e - \sum_{e \in \delta^-(v)} x_e &= \beta_v && \text{ for all } v \in V, \label{eq:wardrop-equation-2}\\
\pi_s &= 0, \label{eq:wardrop-equation-3} 
\end{align}
\end{subequations}
as well as the inequalities
\begin{subequations}
\label{eq:wardrop-inequality}
\begin{align}
\pi_v + a_e x_e + \sum_{\theta \in \Theta} \mu_{\theta} b_e^{\theta} &\geq \pi_w && \text{ for all } e \in E \setminus A,\label{eq:wardrop-inequality-1}\\
x_e &\geq 0 && \text{ for all } e \in E.\label{eq:wardrop-inequality-2}
\end{align}
\end{subequations}

We claim that for all $A \in \mathcal{A}$, the linear system \eqref{eq:wardrop-equation-1}--\eqref{eq:wardrop-equation-3} has full rank. To see this, let $\Gamma \in \mathbb{R}^{V \times A}$ be the incidence matrix of the subgraph $G_A$,
i.e., $\Gamma = (\gamma_{v,e})_{v \in V, e \in A}$ defined as $\gamma_{v,e} = 1$, if $e \in \delta^-(v)$, $\gamma_{v,e} = -1$ if $e \in \delta^+(v)$, and $\gamma_{v,e} = 0$, otherwise. Let $D \in \mathbb{R}^{A \times A} = \text{diag}(a_1,\dots,a_{k})$ with $k = |A|$ be the diagonal matrix with the slopes of the cost functions of the edges on the diagonal. Eliminating $\pi_s = 0$, the system \eqref{eq:wardrop-equation} can be written as
\begin{align}
\label{eq:matrix-equation}
\left[
\begin{array}{c c}
D & \hat{\Gamma}^\top\\
\hat{\Gamma} & \mathbf{0}	
\end{array}
\right]
\left[
\begin{array}{c}
x \\
\hat{\pi}
\end{array}
\right] =
\left[
\begin{array}{c}
-\sum_{\theta \in \Theta} \mu_{\theta} b^\theta\\
\hat{\beta} 
\end{array}
\right],
\end{align}
where $\hat{\pi}$ is the vector of vertex potentials with the entry for $s$ removed, $\hat{\Gamma}$ is the incidence matrix with the row for $s$ removed, and $\hat{\beta}$ is the vector $\beta$ with the entry for $s$ removed. Using Schur complements, we obtain that the matrix on the left hand side of \eqref{eq:matrix-equation} is invertible if and only if $\hat{L}:=\hat{\Gamma}D^{-1}\hat{\Gamma}^\top$ is invertible \citep[cf.][Theorem~8.5.11]{Harville97}.
In that case, the inverse is given by
\begin{align} \label{eq:harville}
\left[
\begin{array}{c c}
D & \hat{\Gamma}^\top\\
\hat{\Gamma} & \mathbf{0}	
\end{array}
\right]^{-1} = 
\left[
\begin{array}{c c}
D^{-1} - D^{-1}\hat{\Gamma}^\top \hat{L}^{-1} \hat{\Gamma} D^{-1}\;\;\;  &  \;\;\;D^{-1} \hat{\Gamma}^\top \hat{L}^{-1}\\
\hat{L}^{-1} \hat{\Gamma} D^{-1} & -\hat{L}^{-1}
\end{array}
\right].
\end{align}
In general, the Laplacian matrix $L$ of a connected graph is defined as $L=\Gamma\Gamma^{-1}$, i.e., $\ell_{v,v}$. Thus,
the matrix $\hat{L}$ is a weighted Laplacian matrix of a connected graph (with the entry for $s$ removed) which is known to have full rank. This implies that for fixed $\mu$, there is a unique solution $x$ satisfying \eqref{eq:wardrop-equation} and a unique value for $\pi_t$. 

The equations~\eqref{eq:wardrop-equation} and the inequalities~\eqref{eq:wardrop-inequality} together with $\sum_{\theta \in \Theta} \mu_{\theta} = 1$ and $\mu \geq 0$ define a polytope of all vectors $(\mu,x,\pi) \in \mathbb{R}^{\Theta \times E \times V}$. Here, $\mu \in \Delta(\Theta)$ is a belief, $x$ is a corresponding Wardrop equilibrium with support $A$ where $x_e=0$ for all $e\in E\setminus A$, and $\pi$ is a corresponding vector of vertex potentials. Since the projection of a polytope is a polytope again the set $P_A$ is a polytope as well.

Observe that \eqref{eq:wardrop-equation} gives a system of linear equations which, for fixed $\mu$, has a unique solution in $x$. Thus, the Wardrop equilibrium $x^*(\mu)$ is an affine function in $\mu$ on $P_A$. To see that also the cost of the Wardrop equilibrium is affine on $P_A$ note that the cost of the Wardrop equilibrium is given by $d \pi_t$ and, hence, the result follows.

\end{proof}

\subsection{Proof of \Cref{cor:monotone}}
\label{app:cor:monotone}

\cormonotone*

\begin{proof}
For a fixed support $A \in \mathcal{A}$, we obtain from \eqref{eq:harville} that
\begin{align*}
\left[
\begin{array}{c}
x \\
\hat{\pi}
\end{array}
\right]  = 
\left[
\begin{array}{c c}
D^{-1} - D^{-1}\hat{\Gamma}^\top \hat{L}^{-1} \hat{\Gamma} D^{-1} \;\;\; & \;\;\;  D^{-1} \hat{\Gamma}^\top \hat{L}^{-1}\\
\hat{L}^{-1} \hat{\Gamma} D^{-1} & -\hat{L}^{-1}
\end{array}
\right]
\left[
\begin{array}{c}
-\sum_{\theta \in \Theta} \mu_{\theta} b^\theta\\
\hat{\beta} 
\end{array}
\right],
\end{align*}
where $\hat{\beta}_t = -d$ and $\hat{\beta}_v = 0$, otherwise. We obtain
\begin{align*}
\hat{\pi} = -\hat{L}^{-1} \hat{\Gamma}D^{-1}\Biggl(\sum_{\theta \in \Theta} \mu_{\theta} b^{\theta}\Biggr)  - \hat{L}^{-1}\hat{\beta},
\intertext{and, in particular,}
\pi_t = -e_t \Biggl(\hat{L}^{-1} \hat{\Gamma}D^{-1} \Biggl(\sum_{\theta \in \Theta} \mu_{\theta} b^{\theta}\Biggr) \Biggr)   + d \cdot \ell_{t,t},
\end{align*}
where $e_t$ is the unit vector corresponding to row $t$ and $\ell_{t,t}$ is the diagonal entry of $\hat{L}^{-1}$ that belongs to row and column $t$.
It was shown in \cite[Lemma 2.5]{warode2022parametric} that $\hat{L}$ is an $M$-matrix, i.e., all off-diagonal entries are non-positive, and the real parts of its eigenvalues are non-negative. Since $M$-matrices are inverse-positive, we have $\ell_{t,t} \geq 0$. This shows that $\pi_t$ is non-decreasing in $d$. To show that $\pi_t$ is strictly increasing in $d$, we have to invest some more effort by essentially revisiting the proof that inverses of $M$-matrices are non-negative (with some minor tweaks).

Since $\hat{L}$ is an $M$-matrix, we can write it as $\hat{L} = sI - \Matri$ with $s > 0$, $I$ the identity matrix, and $\Matri = (\matri_{i,j})$ a non-negative matrix.  
By making $s$ large enough, we can ensure that $\matri_{t,t} > 0$.
By the Perron-Frobenius-Theorem, the spectral radius $\rho(\Matri)$ of $\Matri$ is attained for a non-negative eigenvalue, i.e., there is a non-negative eigenvalue $\lambda^*$ of $\Matri$ such that $\rho(\Matri) = \lambda^*$.
As $\hat{L}$ is a symmetric $M$-matrix, all eigenvalues of $\hat{L}$ are positive and real.
Further, for any eigenvalue $\lambda$ of $\Matri$, we have that $s- \lambda$ is an eigenvalue of $\hat{L}$. In particular, we have that $s - \rho(\Matri)$ is an eigenvalue of $\hat{L}$. Since $\hat{L}$ only has non-negative eigenvalues, we have that $s - \rho(\Matri) > 0$, i.e., $\rho(\Matri) < s$. Let $\Matri' = \frac{1}{s}\Matri$, then $\rho(\Matri') < 1$. 
Let $\hat{L}' = \frac{1}{s}\hat{L}$. Using that $\rho(\Matri')< 1$, the series $\sum_{k=0}^n (\Matri')^k$ converges and satisfies the equation
\begin{align*}
\hat{L}' \sum_{k=0}^\infty (\Matri')^k = (I - \Matri') \sum_{k=0}^\infty (\Matri')^k = \Biggl(\sum_{k=0}^\infty (\Matri')^k\Biggr) - \Biggl(\sum_{k=1}^\infty (\Matri')^k \Biggr) = (\Matri')^0 = I.
\end{align*}
We have established
\begin{align*}
 \hat{L}^{-1} = (s\hat{L}')^{-1} = \frac{1}{s} (\hat{L}')^{-1} = \frac{1}{s} \sum_{k=0}^\infty (\Matri')^k .   
\end{align*}
We have chosen $s$ such that $\matri_{t,t} > 0$ and hence also $\matri'_{t,t} = \matri_{t,t}/s > 0$. Since $\Matri'$ is non-negative, for the higher powers of $(\Matri')^k$ with $k \geq 1$ only non-negative terms are added to $\matri'_{t,t}$. We thus obtain that $\ell_{t,t} \geq \matri'_{t,t}/s > 0$, as claimed.

This shows that, for a fixed support, $\pi_t$ is strictly increasing in $d$.
Since $\pi_t$ is continuous in~$d$ \cite[Proposition~3.1]{CominettiDS24}, the result follows.

\end{proof}

\section{Deferred Proofs from \Cref{sec:sepa}}
\label{app:sepa}

\subsection{Proof of \Cref{lem:smaller-support}}
\label{app:tlem:smaller-support}

\lemsmallersupport*

\begin{proof}
		We prove the statement by induction over $|E|$.
		
		For $|E|=1$ the graph has only a single edge~$e = \{s,t\}$. The only support $A \in \mathcal{A}$ is $A = E$. Clearly, $P_E = \Delta(\Theta)$ and the statement holds trivially since there is no $\mu \in \Delta(\Theta) \setminus P_E$.
		
		Fix $k \in \mathbb{N}$ and suppose that the statement of the lemma holds for all series-parallel graphs with up to $k$ edges. Consider a series-parallel graph $G = (V,E)$ with $k+1$ edges.
  Since $G$ is series-parallel, there is a sequence of serial and parallel compositions of smaller series-parallel graphs that ends in $G$. In particular, $G$ is constructed either by a final serial composition of two smaller series-parallel graphs $G_1 = (V_1,E_1)$ and $G_2 = (V_2, E_2)$, or by a final parallel composition of $G_1$ and $G_2$. Let $A_1 = A \cap E_1$ and $A_2 = A \cap E_2$.
For $j \in \{1,2\}$, we denote by $\mathcal{A}_j$ the set of supports for $G_j$ such that the corresponding subgraph is connected and spans $V_j$.
		Further, for a support $T \in  \mathcal{A}_j$, let $C_T^{j}(\cdot)$ denote the cost of the solution of the linear system \eqref{eq:wardrop-equation} with support $T$ for $G_j$.
		We proceed to distinguish the following two cases.
				
		\paragraph{First case: Final composition is serial.}
		Let $A \in \mathcal{A}$ with $\mu \in \Delta(\Theta) \setminus P_A$ be given.
Since the final composition is serial, we have $A_j \in \mathcal{A}_j$ for all $j \in \{1,2\}$ where at least one of $A_1$ and $A_2$ is infeasible. However, both $G_1$ and $G_2$ have at most $k$ edges, so we can apply the induction hypothesis on either of these graphs and obtain the support $A_j' \in \mathcal{A}_j$ such that either $C^{j}_{A_j'}(\mu) < C^{j}_{A_j}(\mu)$ or $C^{j}_{A_j'}(\mu) = C^{j}_{A_j}(\mu)$ and $A_j'$  is feasible for all $j\in \{1,2\}$.    
 		Since $G_1$ and $G_2$ were composed in series, we have for any support $T \in \mathcal{A}$ that
		\begin{align*}
		C_T(\mu)=C^{1}_{T \cap E_1}(\mu) + C^{2}_{T \cap E_2}(\mu).
		\end{align*}
		We set $A' = A_1' \cup A_2' \in \mathcal{A}$.
        Since $A_j'$ is a support of $G_j$ for all $j\in \{1,2\}$ and the composition is serial, it is easy to verify that $A'$ is a support for $G$, i.e., $G= (V,A')$ is connected and contains an $(s,v)$-path for all $v \in V$.
        Furthermore, if $A_j'$ is feasible for all $j\in \{1,2\}$, $A'$ is also feasible for $G$.
        On the other hand, if at least one of $A_1'$ and $A_2'$ is not feasible, then $C^{j}_{A_j'}(\mu) < C^{j}_{A_j}(\mu)$ for at least one of the two supports while for the other one we have $C^{j}_{A_j'}(\mu) \leq C^{j}_{A_j}(\mu)$.
        Hence, we then obtain
		\begin{align*}
		C_{A'}(\mu)=C^{1}_{A_1'}(\mu) + C^{2}_{A_2'}(\mu) < C^{1}_{A_1}(\mu) + C^{2}_{A_2}(\mu) = \cost_A(\mu),
		\end{align*}
        so we have found a new support $A' \in \mathcal{A}$ with strictly less cost than $A$. 
				
		\paragraph{Second case: Final composition is parallel.}
		Let again $A \in \mathcal{A}$ with $\mu \in \Delta(\Theta) \setminus P_A$ be given.
 A major complication compared to the first case is that we cannot assume anymore that $A_j \in \mathcal{A}_j$ for all $j \in \{1,2\}$ since only one of the two parallel components may contain an $s$-$t$-path.
		
		For $j \in \{1,2\}$, let $\lambda_j = \sum_{e \in \delta^+(s) \cap E_j} (x^*_A)_e$ be the total flow in $G$ send over the parallel component $G_j$. We have $\lambda_1 + \lambda_2 = d$.
        It is without loss of generality to assume that $\lambda_1\leq \lambda_2$.
		
		\paragraph{Subcase: $\lambda_1 < 0$.} With $\lambda_1 < 0$, we have $A_1 \in \mathcal{A}_1$, since there is a non-zero flow in that component, and, therefore, $t$ can be reached from $s$.
		Let $A_1' = A_1 \setminus (\delta^-(t) \cap E_1)$. Since $\lambda_1 < 0$, we have $\lambda_2 > d$ and, thus, there is a path from $s$ to $t$ in $A_2$ and, in particular, $A' =A_1' \cup A_2 \in \mathcal{A}$.
		
		In the following, we write $\lambda_j'$ and $\pi'_v$ for the values of $\lambda_j$ and $\pi_v$ for the new support $A'$.
		Then, we have $\lambda_1'= 0$ and, hence $\lambda_2' = d$.
		By the proof of Corollary~\ref{cor:monotone}, the per-unit cost $\pi_t$ of an equilibrium flow (i.e., a flow that satisfies the linear system \eqref{eq:wardrop-equation} for a fixed support) is strictly increasing in $\lambda$. 
		Hence $\pi_t' < \pi_t$. With $C_{A'}(\mu) = \pi_t' d < \pi_t d = C_A(\mu)$, the result follows.
		
		\paragraph{Subcase: $\lambda_1 = 0$.} 
		In this subcase, a potential issue is that $A_1$ may not be contained in $\mathcal{A}_1$ since there may not be a path from $s$ to $t$ in $A_1$. As a consequence, we may not be able to apply the induction hypothesis on $G_1$. However, with $\lambda_2=d$, we know that $A_2$ is a support for $G_2$.
  		We first compute a shortest path tree with respect to $c_e(0 \mid \mu)$ in $G_1$ and obtain a shortest path potential $\pi_v'$ for all $v\in V_1$.
        
        %case 1.2
        If $\pi_t> \pi_t'$, the $s$-$t$-path in the shortest path tree of $G_1$ has lower costs than the used $s$-$t$-paths in $G_2$.
        We set $A_1'=\{e=\{v,w\} \in E_1\mid c_e(0\mid \mu)=\pi_w'-\pi_v' \}$ and $A'=A_1'\cup A_2$.
        Hence, in equilibrium, we obtain new flow values $\lambda_1'>0$ and $\lambda_2'<d$. The result then follows from the monotonicity (proof of Corollary~\ref{cor:monotone}) which causes the per-unit cost in $G_2$ to decrease and therefore $C_{A'}(\mu) < C_{A}(\mu)$. 
        
        %case 1.1
        If $\pi_t\leq \pi_t'$, we set $A_1'=\{e=\{v,w\} \in E_1\mid c_e(0 \mid \mu)=\pi_w'-\pi_v' \}\setminus \{e_{st}\}$, where $e_{st}$ is the last edge on the unique $s$-$t$-path in that tree.
        Note, that $A_1'$ is not a support for $G_1$ since vertex $t$ is not reached from $s$, and in addition, there may be vertices, which used to be connected to $s$ by a path via $t$ in the shortest path tree.

        If $A_2$ is a feasible support for $G_2$, it follows that $A'=A_1'\cup A_2$ is a feasible support of $G$ with $C_A(\mu)=C_{A'}(\mu)$ and there is nothing left to show.
        
        If on the other hand, $A_2$ is not a feasible support for $G_2$, we apply the induction hypothesis on $G_2$ and obtain a new support $A_2'$ of $G_2$ that is either feasible or has lower cost on $G_2$ than $A_2$. In the first case, we use the same argumentation as above, where we assumed that $A_2$ was feasible and are done.
        In the latter case, we set $A'=A_1\cup A_2'$. Since there is no $s$-$t$-path in $A_1$ and therefore still no flow on $G_1$, all flow is on $G_2$ using only edges of $A_2'$. Hence, $C_{A'}(\mu)=C^{2}_{A_2'}(\mu) < C^{2}_{A_2}(\mu) = \cost_A(\mu)$, so we have found a support of $G$ with strictly less cost than $A$.
		
	\paragraph{Subcase: $\lambda_1 > 0$.} In this case, both subgraphs $G_1$ and $G_2$ carry flow which implies that $A_j \in\mathcal{A}_j$ for all $j\in\{1,2\}$. Applying the induction hypothesis on $G_1$ and $G_2$, we end up in a similar situation as in the case where the last composition was serial. We either obtain two supports $A_1'$ and $A_2'$ that are both feasible and $C_{A'_j}(\mu) = C_{A_j}(\mu)$ for all $j\in\{1,2\}$, or for at least one of the supports $A_1'$ and $A_2'$, we have $C_{A'_j}(\mu) < C_{A_j}(\mu)$. In the first case, it is easy to see that $A'=A_1'\cup A_2'$ is a feasible support for $G$, that has the same cost as $A$ and the same flow distributions $\lambda_1$ and $\lambda_2$.
    In the latter case, the result follows from the monotonicity of the flows since in equilibrium less flow will be sent via the more expensive subgraph and, thus, the per-unit cost in both subgraphs decreases to an equal level which is smaller than the per-unit cost in $A$.
		
\end{proof}

\subsection{Proof of \Cref{thrm:sepa-full-opt-iff}}
\label{app:thrm:sepa-full-opt-iff}

\thrmsepafulloptiff*

\begin{proof}
    The if-part follows from Theorem~\ref{thrm:sepa-full-opt}.
    To prove the only-if part, it suffices to show that for a non-series-parallel graph there exist cost functions $c^\theta_e :\mathbb{R}_{\geq 0}\rightarrow \mathbb{R}$, $e\in E,\theta \in \Theta$ such that full information revelation is not optimal.
    We call a graph with two designated vertices $s,t \in V$ a two-terminal graph.
    In the following we make use of some definitions by \cite{Duffin65}.
    We call two edges $e,e'\in E$ \emph{confluent} if there are no
    two (undirected) simple cycles $C_1$ and $C_2$ both containing $e$ and $e'$ such that the two edges have the same orientation in $C_1$ and a different orientation in $C_2$.
    Further, an edge is \emph{$s$-$t$-confluent} if it is confluent with the (virtually added) edge $(t,s)$.
    As shown by Duffin, a two-terminal graph $G$ is series-parallel if and only if all edges are $s$-$t$-confluent.	
    Let $G=(V,E)$ be a two-terminal graph with source $s\in V$ and sink $t\in V$ such that $G$ is not series-parallel, i.e., there exists an edge $b\in E$ that is not confluent with the (virtually added) edge $a:=(t,s)$.
    Hence, there exist two cycles $C_1$ and $C_2$ containing $a$ and $b$ such that $a$ is used in the same direction in both cycles but the direction of $b$ changes.
    For an illustration, see cycle $C_1=(s,v_h,v_i,v_j,v_{j+1},v_k,v_l,t,s)$ and $C_2=(s,v_h,v_k,v_{j+1},v_j,v_i,v_l,t,s)$ in \Cref{fig:wheatstone}.
    Note that the paths represented by a dotted line may consist of an arbitrary number of edges (including $0$, in which case the corresponding vertices are the same).
    Any dashed path contains at least one edge.
    Next, we choose an arbitrary edge on each dashed path and label it $e_{hi},e_{hk},e_{il},$ and $e_{kl}$, respectively.
    We now define the cost function as follows:
    \begin{align*}
        c_e^{\theta_1}(x)=
        \begin{cases}
            x & \text{if } e\in \{e_{hi},e_{kl}\},\\
            1 & \text{if } e\in \{e_{hk},e_{il}\},\\
            \infty & \text{if } e\in E\setminus (E[C_1]\cup E[C_2]),\\
            0 & \text{otherwise},
        \end{cases}
        &&
        c_e^{\theta_2}(x)=
        \begin{cases}
            1 & \text{if } e=b,\\
            c_e^{\theta_1}(x) &\text{otherwise}.
        \end{cases}
    \end{align*}
    Note that only the cost of edge $b$ depends on the state.
    Ignoring all edges that have cost either 0 or $\infty$ in both states, we obtain the embedded graph shown in Figure~\ref{fig:wheatstone-net}.
    With the cost function defined above, we obtain the example illustrated in Figure~\ref{fig:braess} for which we showed that full information revelation is not an optimal solution.

\end{proof}
	
\begin{figure}[t!]
\scriptsize
\begin{subfigure}[t]{0.5\textwidth}
\begin{center}
			\begin{tikzpicture}[xscale=0.8,yscale=0.8,shorten > = 0pt]
				> = stealth, % arrow head style
				shorten > = 0pt, % don't touch arrow head to node
				auto,
				node distance = 3cm, % distance between nodes
				thick, % line style
				scale = 0.8
				]
				
				\tikzstyle{every state}=[
				draw = black,
				thick,
				fill = white,
				inner sep=0pt,
				minimum size = 7mm,
				]
				
				\node[state,label=left:{$s$}] (s) at (-2,0) [circle] {}; 
				\node[state,label=right:{$t$}] (t) at (6,0) [circle] {}; 
				\node[state,label=above:{$v_i$}] (v1) at (2,2) [circle]{}; 
				\node[state,label=below:{$v_k$}] (v2) at (2,-2) [circle]{}; 
				\node[state,label=below left:{$v_h$}] (vh) at (0,0) [circle] {};
				\node[state,label=below right:{$v_l$}] (vl) at (4,0) [circle] {};
				\node[state,label=right:{$v_j$}] (vj) at (2,2/3) [circle]{}; 
				\node[state,label=right:{$v_{j+1}$}] (vj1) at (2,-2/3) [circle]{};

				\path[] (vh) edge node[above left] {$e_{hi}$} (v1)[dashed,thick];
				\path[] (vh) edge node[below left] {$e_{hk}$} (v2)[dashed,thick];
				\path[] (v1) edge node[above right] {$e_{il}$} (vl)[dashed,thick];
				\path[] (v2) edge node[below right] {$e_{kl}$} (vl)[dashed,thick];
				\path[] (v1) edge (vj)[dotted,thick];
				\path[] (vj) edge node[right] {$b$} (vj1);
				\path[] (vj1) edge (v2)[dotted,thick];
				\path[] (s) edge (vh)[dotted,thick];
				\path[] (vl) edge (t)[dotted,thick];		
				\draw[thick] (s) ..controls (-2,4) and (6,4)..  node[above,yshift=0mm] {$a$} (t); 
			\end{tikzpicture}
			\end{center}
			\caption{}
			\label{fig:wheatstone}
		\end{subfigure}\hfill
		\begin{subfigure}[t]{0.5\textwidth}
			\begin{center}
			\begin{tikzpicture}[xscale=0.9,yscale=0.9,shorten > = 0pt]
				%> = stealth', % arrow head style
				%shorten > = 0pt, % don't touch arrow head to node
				%auto,
				node distance = 3cm, % distance between nodes
				thick, % line style
				scale = 0.8
				]
				
				\tikzstyle{every state}=[
				draw = black,
				thick,
				fill = white,
				inner sep=0pt,
				]
				
				\node[state,label=left:{$s$}] (s) at (0,0) [circle] {}; 
				\node[state,label=right:{$t$}] (t) at (4,0) [circle] {}; 
				\node[state,label=above:{$v_i$}] (v1) at (2,2) [circle]{}; 
				\node[state,label=below:{$v_k$}] (v2) at (2,-2) [circle]{}; 
				
				\path[] (s) edge node[above left] {$e_{hi}$} (v1);
				\path[] (s) edge node[below left] {$e_{hk}$} (v2);
				\path[] (v1) edge node[above right] {$e_{il}$} (t);
				\path[] (v2) edge node[below right] {$e_{kl}$} (t);
				\path[] (v1) edge node[right] {$b$} (v2);		
				%			\path[->] (s) edge[bend left=90]  node[above,yshift=0mm] {$ \alpha $} (t); 
			\end{tikzpicture}
			\end{center}
			\caption{}
			\label{fig:wheatstone-net}
		\end{subfigure}
		\caption{Illustrations for the proof of Theorem~\ref{thrm:sepa-full-opt-iff}: (a) cycles $C_1$ and $C_2$ where edges are bold and paths are dashed; (b) the embedded Braess graph with $s=v_h$ and $t=v_l$.}
\end{figure}

\section{Deferred Proofs from \Cref{sec:LPs}}
\label{app:LPs}

\subsection{Proof of \Cref{thm:bigLP}}
\label{app:thm:bigLP}

\thmbigLP*

\begin{proof}
    First, consider a single signal $\sigma \in [k]$. In the emerging Wardrop equilibrium, every commodity $i$ uses only cost-optimal $s_i$-$t_i$-paths. For any belief $\mu_\sigma$ that results in a Wardrop equilibrium with fixed supports $(\support_{i,\sigma})_{i \in R}$, we can extend the description of Wardrop flows with vertex potentials and flow-conservation constraints developed in Lemma~\ref{lem:piecewise-linear}, as precised in the following. A straightforward adaptation of the Karush-Kuhn-Tucker conditions to multi-commodity games allows to generalize the polytope $P_A$ described by~\eqref{eq:wardrop-equation} and \eqref{eq:wardrop-inequality} to multi-commodity games as follows. For each $i \in R$, we use the balance vector $(\beta_{v,i})_{v \in V}$ 
    \[
        \beta_{v,i} = \begin{cases} \phantom{-}d_i & \text{ if } v = s_i,\\ -d_i & \text{ if } v = t_i, \text{ and} \\ \phantom{-}0 & \text{ otherwise,} \end{cases}
    \]
    to define the system of inequalities
    \begin{align}
        \label{eq:wardrop-inequality-multi}
        \begin{aligned}
            \pi_{v,i,\sigma} + a_e x_{e,\sigma} + \sum_{\theta \in \Theta} \mu_{\theta,\sigma} b_e^{\theta}  &= \pi_{w,i,\sigma} && \text{ for all } e=(v,w) \in \support_{i,\sigma}, i \in R,\\
            \pi_{v,i,\sigma} + a_e x_{e,\sigma} + \sum_{\theta \in \Theta} \mu_{\theta,\sigma} b_e^{\theta} &\geq \pi_{w,i,\sigma} && \text{ for all } e=(v,w) \in E \setminus \support_{i,\sigma}, i \in R,\\
            \sum_{e \in \delta^+(v)} x_{e,i,\sigma} - \sum_{e \in \delta^-(v)} x_{e,i,\sigma} &= \beta_{v,i} && \text{ for all } v \in V, i \in R,\\
            x_{e,\sigma} &= \sum_{i \in R} x_{e,i,\sigma} && \text{ for all } e \in E,\\ 
            %x_{e,i,\sigma} &\geq 0 && \text{ for all } \changed{e \in A_{i,\sigma}}, i \in R, \\
            x_{e,i,\sigma} &\geq 0 && \text{ for all } e \in E, i \in R, \\
            \pi_{s_i,i,\sigma} &= 0 \,.
        \end{aligned}
    \end{align}
    Constraints~\eqref{eq:wardrop-inequality-multi} capture the Wardrop flow with given supports $(\support_{i,\sigma})_{i \in R}$. Let us turn to the conditional belief $\mu_\sigma$ over states. It emerges from the signaling probabilities $\phi_{\theta,\sigma}$ and is described by the following constraints 
	\begin{align}
	\label{eq:condDist}
	\begin{aligned}
		\phi_{\theta,\sigma} &\le \prior_\theta  && \text{for all } \theta \in \Theta,\\
		\phi_{\theta,\sigma} &\ge 0  && \text{for all } \theta\in \Theta,\\
		\phi_{\theta,\sigma} &= \D \phi_\sigma \cdot \mu_{\theta,\sigma} && \text{for all } \theta\in \Theta,\\
		\phi_{\sigma} &= \D \sum_{\theta \in \Theta} \phi_{\theta,\sigma}.
	\end{aligned}
	\end{align}
	If signal $\sigma$ is not issued, then $\phi_{\sigma} = 0$. Hence, $\phi_{\theta,\sigma} = 0$ for all $\theta\in\Theta$ and the constraints \eqref{eq:wardrop-inequality-multi} and~\eqref{eq:condDist} are not meaningful since no conditional belief $\mu_\sigma$ is formed and hence no supports $(A_{i,\sigma})_{i \in R}$ exist. Instead, suppose signal $\sigma$ is issued.
	Then, \eqref{eq:wardrop-inequality-multi} and~\eqref{eq:condDist} describe the polytope of conditional beliefs that result in a Wardrop equilibrium on supports $\support_{i,\sigma}$.
	
	The constraint $\phi_{\theta,\sigma} = \phi_\sigma \cdot \mu_{\theta,\sigma}$ in \eqref{eq:condDist} is non-linear.
%	Since the instance has affine costs with stochastic offsets, $\mu_{\theta,\sigma}$ appears only as a coefficient for $b_e^{\theta}$.
	We substitute $\mu_{\theta,\sigma} = \phi_{\theta,\sigma}/\phi_{\sigma}$ in~\eqref{eq:wardrop-inequality-multi} and multiply all (in-)equalities of~\eqref{eq:wardrop-inequality-multi} by $\phi_{\sigma} > 0$.
	Afterward, we further substitute $y_{e,i,\sigma} = x_{e,i,\sigma} \cdot \phi_{\sigma}$ and $\tau_{v,i,\sigma} = \pi_{v,i,\sigma} \cdot \phi_{\sigma}$.
	Being a mixture of demand flow and signal probability, $y_{e,i,\sigma}$ can be interpreted as the flow of ``probabilistic demand'', whereas $\tau$ becomes the standard vertex potential for this flow. The nonlinear constraint $\mu_{\theta,\sigma} = \phi_{\theta,\sigma}/\phi_{\sigma}$ as well as variables $\mu_{\theta,\sigma}$ and $\phi_\sigma$ can be omitted. This yields the following polytope that is equivalent to \eqref{eq:wardrop-inequality-multi}$+$\eqref{eq:condDist}:
	\begin{equation}
	\label{eq:signalSupportNetwork}
	\begin{array}{rcll}
		\D \tau_{v,i,\sigma} + a_e y_{e,\sigma} + \sum_{\theta \in \Theta} \phi_{\theta,\sigma} b_e^\theta & = & \tau_{w,i,\sigma} & \text{for all } e=(v,w) \in \support_{i,\sigma}, i \in R, \\
		\D \tau_{v,i,\sigma} + a_e y_{e,\sigma} + \sum_{\theta \in \Theta} \phi_{\theta,\sigma} b_e^\theta & \ge & \tau_{w,i,\sigma} & \text{for all } e=(v,w) \in E \setminus \support_{i,\sigma}, i \in R, \\
		\D \sum_{e \in \delta^+(v)} y_{e,i,\sigma} - \sum_{e \in \delta^{-}(v)} y_{e,i,\sigma} &=& \beta_{i,v} \cdot \D \sum_{\theta \in \Theta} \phi_{\theta,\sigma} \phantom{~~~} & \text{for all } v \in V, i\in R,\\
		y_{e,\sigma} &=& \D \sum_{i \in R} y_{e,i,\sigma} & \text{for all } e\in E,\\
		y_{e,i,\sigma} & \ge & 0  & \text{for all } e \in \support_{i,\sigma}, i \in R, \\
	    \tau_{s_i,i,\sigma} & = & 0 & \text{for all } i \in R, \\
		%y_{e,i,\sigma} & = & 0  & \text{for all } e \in E \setminus \support_i, i \in R, \\
		\phi_{\theta,\sigma} & \le & \prior_\theta  & \text{for all } \theta\in \Theta,\\
		\phi_{\theta,\sigma} & \ge & 0  & \text{for all } \theta\in \Theta.
	\end{array}
	\end{equation}
	For every signal $\sigma$ this polytope has a trivial all-zero solution (i.e., $\tau_{v,i,\sigma}=0$ for all $v\in V,i\in R$, $y_{e,\sigma}=0$ for all $e\in E$, and $\phi_{\theta,\sigma}=0$ for all $\theta\in \Theta$) which can be interpreted as the signal not being issued.
	Every non-zero solution corresponds to (a part of) a signaling scheme $\Phi$ that includes the signal $\sigma$ with a resulting Wardrop equilibrium using the given support $\support_{i,\sigma}$. 
 	We can describe the set of all schemes $\Phi$ by combining all polytopes for individual signals $\sigma$ with given supports $\support_{i,\sigma}$ in~\eqref{eq:signalSupportNetwork} and adding the decomposition constraint for the prior $\prior$
    \begin{align}
	   \label{eq:dist}
		\sum_{\sigma \in [k]} \phi_{\theta,\sigma} &= \prior_\theta  && \text{for all } \theta\in \Theta.
	\end{align}
    We intend to optimize over this polytope of signaling schemes, i.e., we strive to find a scheme with smallest total expected cost $C(\Phi)$. The cost can be expressed as follows
	\[	
        \cost(\Phi) = \sum_{\sigma \in [k]}\phi_\sigma \cdot \cost(\mu_\sigma) = \sum_{\sigma \in [k]} \phi_{\sigma} \cdot \left(\sum_{i \in R} d_i \cdot \pi_{t_i,i,\sigma}\right) = \sum_{\sigma \in [k]} \sum_{i \in R} d_i \cdot \tau_{t_i,i,\sigma},
	\]
	i.e., the total cost of a signaling scheme is equal to the weighted sum of potentials $\tau_{t_i,i,\sigma}x$ at the destination $t_i$ for all signals $\sigma \in \Sigma$. We here used $\tau_{e,i,\sigma} = \pi_{e,i,\sigma} \cdot \phi_{\sigma}$ as defined above. As a consequence, finding an optimal signaling scheme for a given set of support vectors can be formulated as the following linear program
	
	\begin{equation}
		\label{eq:schemeSupportNetwork}
		\begin{array}{lrcll}
			\text{Min. } & \multicolumn{3}{l}{\D \sum_{\sigma \in [k]} \sum_{i \in R} d_i \cdot \tau_{t_i,i,\sigma} } \\
			\text{s.t. } &\D \tau_{v,i,\sigma} + a_e y_{e,\sigma} + \sum_{\theta \in \Theta} \phi_{\theta,\sigma} b_e^\theta & = & \tau_{w,i,\sigma} & \text{for all } e=(v,w) \in \support_{i,\sigma}, i \in R, \sigma \in [k], \\
			&\D \tau_{v,i,\sigma} + a_e y_{e,\sigma} + \sum_{\theta \in \Theta} \phi_{\theta,\sigma} b_e^\theta & \ge & \tau_{w,i,\sigma} & \text{for all } e=(v,w) \in E \setminus \support_{i,\sigma}, i \in R, \sigma \in [k],\\
	   	    &\D \sum_{e \in \delta^+(v)} y_{e,i,\sigma} - \sum_{e \in \delta^{-}(v)} y_{e,i,\sigma} &=& \beta_{i,v} \cdot \D \sum_{\theta \in \Theta} \phi_{\theta,\sigma} \phantom{~~~}& \text{for all } v \in V, i\in R,\\
			&y_{e,\sigma} &=& \D \sum_{i \in R} y_{e,i,\sigma} & \text{for all } e\in E, \sigma \in [k],\\
			&y_{e,i,\sigma} & \ge & 0  & \text{for all } e \in \support_i, i \in R, \sigma \in [k], \\
		    &\tau_{s_i,i,\sigma} & = & 0 & \text{for all } i \in R, \sigma \in [k], \\
			&\D \sum_{\sigma \in [k]} \phi_{\theta,\sigma} & = & \prior_\theta & \text{for all } \theta\in \Theta,\\
			&\phi_{\theta,\sigma} & \le & \prior_\theta & \text{for all } \theta\in \Theta, \sigma \in [k],\\
			&\phi_{\theta,\sigma} & \ge & 0  & \text{for all } \theta\in \Theta, \sigma \in [k].
		\end{array}
	\end{equation}
    The number of variables and constraints is a polynomial in $|\Theta|$, $|E|$, $|R|$, and $k$. As such, the LP can be solved in polynomial time.
    
\end{proof}

\subsection{Proof of \Cref{thm:expoential-supports}}
\label{app:thm:expoential-supports}

\thmexpoentialsupports*

\begin{figure}[t!]
 \scriptsize
 
	\begin{center}
			\begin{tikzpicture}[
			> = stealth, % arrow head style
			shorten > = 1pt, % don't touch arrow head to node
			auto,
			node distance = 3cm, % distance between nodes
			thick, % line style
			xscale = 0.47,
			yscale=0.67
			]
			
			\tikzstyle{every state}=[
			draw = black,
			thick,
			fill = white,
			minimum size = 4mm
			]
			
			\node[state,label=left:{\normalsize $s$}] (s) at (0,0) [circle] {}; 
			\node[state,label=right:{\normalsize $t$}] (t) at (12,0) [circle] {}; 
			\node[state,label=above:{\normalsize $v_6$}] (v6) at (6,3) [circle] {}; 
			\node[state,label=below:{\normalsize $v_1$}] (v1) at (6,-3) [circle] {}; 
			\node[state,label=left:{\normalsize $v_2$}] (v2) at (2.5,0) [circle] {}; 
			\node[state,label=below:{\normalsize $v_3$}] (v3) at (6,-1) [circle] {}; 
			\node[state,label=above:{\normalsize $v_4$}] (v4) at (6,1) [circle] {}; 
			\node[state,label=right:{\normalsize $v_5$}] (v5) at (9.5,0) [circle] {};

			\draw[->] (s) ..controls (0,6) and (12,6)..  node[above,yshift=0mm] {$ 300 \alpha $} (t); 
			\path[->] (s) edge node[above,yshift=0mm, rotate=-325] {$100$} (v6); 
			\path[->] (v6) edge node[above,yshift=0mm, rotate=-35] {$200x$} (t); 
			\path[->] (s) edge node[below,yshift=-0mm,rotate=325] {$200x$} (v1); 
			\path[->] (v1) edge node[below,yshift=0mm,rotate=35] {$100$} (t) [];
			\path[->] (v1) edge node[above,pos=.4,xshift=0mm,yshift=-0mm, rotate=-52] {$200x$} (v2) []; 
			\path[->] (v1) edge node[above,pos=.4, xshift=0mm,yshift=0mm,rotate=-308] {$10$} (v5) [];  
			\path[->] (v2) edge node[above,pos=.2,xshift=0mm, yshift=0mm,rotate=52] {$10$} (v6) [];
			\path[->] (v5) edge node[above,pos=.2,xshift=0mm,yshift=0mm,rotate=308] {$200x$} (v6) [];  
			\path[->] (v2) edge node[above,pos=.6,yshift=-0.0mm,rotate=-340] {$1$} (v4) [];    
			\path[->] (v2) edge node[above,pos=.6,yshift=-0mm,rotate=-20] {$200x$} (v3) [];
			\path[->] (v4) edge node[above,pos=.4,yshift=-0mm,rotate=340] {$200x$} (v5) [];
			\path[->] (v3) edge node[above,pos=.4,yshift=-0.0mm,rotate=20] {$1$} (v5) [];  
			%\path[->] (v3) edge node[xshift=0.2mm] {$\delta$} (v4) [];  
            \path[->] (v3) edge node[xshift=0.2mm] {$0$} (v4) [];  
			\end{tikzpicture}
		\end{center}	
		\caption{Illustration of the modified nested Braess graph $G^{B+p}_3$ with expected edge costs given belief $\mu_\alpha := (1-\alpha, \alpha), \alpha \in [0,1]$. }
        \label{fig:mixed_braess_j=3}
	\end{figure}

\begin{proof}
    The class of games is based on the family of nested Braess graphs considered by \citet[Section 6.1.2]{KlimmW22}. For $j \in \NN$, they show that there are $2^{j+1}$ different demand rates $d_1,...,d_{2^{j+1}} \in [0,3\cdot 10^{j-1}]$ such that all corresponding supports $A^1,...,A^{2^{j+1}}$ are different for the $j$th graph in their construction.
    
  We adopt their construction to our setting with state-based cost functions and unit demand $d=1$ as follows. For given $j \in \NN$, we define the \emph{$j$th nested Braess graph} $G^B_j =(V_j,E_j)$ with vertex set $V_j := \lbrace v_0, v_1, v_2,...,v_{2j}, v_{2j+1} \rbrace$\footnote{We denote $v_0 := a$ and $ v_{2j+1} := t$ for the source and target vertex, respectively.} and edge set $E_j := E^1_j \cup E^2_j \cup E^3_j \cup \lbrace (v_j,v_{j+1}) \rbrace$, where
  \begin{align*}
  E^1_j &:= \lbrace (v_i,v_{i+1}) \mid i \in \lbrace 0,...,2j \rbrace, i \not = j \rbrace, \\
  E^2_j &:= \lbrace (v_i,v_{2j-i}) \mid i \in \lbrace 0,...,j-1 \rbrace, \\
  E^3_j &:= \lbrace (v_{i+1},v_{2j+1-i}) \mid i \in \lbrace 0,...,j-1 \rbrace \rbrace.
  \end{align*}
   We also define $E^O_j := \lbrace \left( s, v_1 \right), \left( s, v_{2j} \right), \left( v_1, t \right), \left( v_{2j}, t \right) \rbrace$ as the set of the outermost edges of $G^B_j$. For $e \in E_j$, we assign (state-independent) edge costs $c^{j}_{e} (x) = a^j_{e} \, x + b^j_{e}$ as follows:
	\begin{align*}
		a^j_{e} &= \begin{cases}
			2 \cdot 10^{j-1} & \text{if}\ e \in E^1_j \setminus \lbrace (v_j, v_{j+1}) \rbrace, \\ 
			0 & \text{else}
		\end{cases}   \\[4mm] b^j_{e} &= \begin{cases}
			10^{j-1-i} & \text{if}\ e =(v_i, v_{2j-i}) \ \text{or}\ e = (v_{i+1}, v_{2j+1-i}), \ \text{for} \ i=0,...,j-1, \\
           % \delta & \text{if}\ e = (v_j,v_{j+1}), \\
			0 & \text{else.}
		\end{cases}
	\end{align*}
    In comparison to \cite{KlimmW22}, we scaled all $a_e^j$ by a factor $2 \cdot 10^{j-1}$. By doing so, demands $\tilde{d}_1,...,\tilde{d}_{2^{j+1}} \in [0,1]$ are now sufficient to generate the $2^{j+1}$ different support sets that emerge in their construction. 

    Note that $a^j_e = 0$ for some cost functions, which we technically do not allow, as we assume all $c_e(x)$ to be strictly increasing. As argued in \cite{KlimmW22}, the construction works with slopes $a^j_e = \epsilon$ added to those cost functions, for sufficiently small $\epsilon \ll 1$. We omit these $\epsilon$-slopes for the benefit of presentation. 
    
    Let $\Theta = \lbrace \theta_1, \theta_2 \rbrace$. We set $c^{j,\theta}_e(x) = c^j_e(x)$ for all $e \in E_j$ and $\theta \in \Theta$, i.e., all cost functions have the same offset in both states. We complete our construction by adding a direct path $p = (s,t)$ to $G^B_j$ for which $c^{j,\theta_1}_p(x) = 0$ and $c^{j,\theta_2}_p(x) = 3 \cdot 10^{j-1}$. Thus, for a given distribution $\mu_\alpha := (1-\alpha,\alpha)^\top$, $\alpha \in [0,1]$, the expected cost of edge $p$ is $c^j_{p}(x \mid \mu_\alpha) = 3 \cdot 10^{j-1} \alpha$. Hence, whenever edge $p$ is active, any other path (through $G^B_j$) in the Wardrop flow cannot induce lower cost. Figure~\ref{fig:mixed_braess_j=3} shows a sketch of the resulting graph $G^{B+p}_j$ for $j=3$. 
    We define $\rho(\mu_\alpha) \in [0,1]$ as the fraction of infinitesimally small agents that routes via a path through $G^B_j$ given $\mu_\alpha$. 
        
    We show that $\rho(\mu_\alpha)$ is strictly increasing and continuous in $\mu_\alpha$. We follow \cite[Claim 4]{KlimmW22} and assert the following properties for our construction:
    \begin{itemize}
        \item[(i)] If $\alpha = 0$, then ${c}^j_{p}(x \mid \mu_\alpha) = 0$ and $\rho(\mu_\alpha) = 0$.
        \item[(ii)] If $\alpha \geq \frac{2}{3}$, then $\rho(\mu_\alpha) = 1$, $x_e(\mu_\alpha) = 0$ for all $e \not \in E^O_j$, and $\pi_t(\mu_\alpha) - \pi_s(\mu_\alpha) = 2 \cdot 10^{j-1}$.
         %\item[(iii)] If $\alpha \leq \frac{17}{60}$, then $\rho(\mu_\alpha) \leq \frac{3}{20}$ and $x_e(\mu_\alpha) = 0$ for $e \in \lbrace (s,v_{2j}), (v_1,t) \rbrace$.
    \end{itemize}
    Statement (i) is evident. For (ii), we obtain $x_e(\mu_\alpha) = 0$ for all $e \not \in E^O_j \cup \lbrace p \rbrace$, and $\pi_t(\mu_\alpha) - \pi_s(\mu_\alpha) = 2 \cdot 10^{j-1}$ from the same arguments as made in \cite{KlimmW22}. Furthermore, $\pi_t(\mu_\alpha) - \pi_s(\mu_\alpha) = 2 \cdot 10^{j-1} \leq  3 \cdot 10^{j-1} \alpha$ for $\alpha \geq \frac{2}{3}$. Hence, for $\mu_\alpha$ with $\alpha \geq \frac{2}{3}$, the edges in $E^O_j$  are used exclusively. 
     
    By Corollary~\ref{cor:monotone}, the per-unit cost $\pi_t(d)$ of the WE as function of the demand $d \in [0,1]$ is strictly increasing for $G^B_j$. In addition, as shown by \citet[Proposition 3.1]{CominettiDS19}, $\pi_t(d)$ is continuous. Since  $c^j_p (x \mid \mu_\alpha) = 3 \cdot 10^{j-1} \alpha$ is continuous and strictly increasing in $\alpha$ as well, this implies the existence of a bijective mapping between all $\alpha \in [0,\frac{2}{3}]$ and all demands $\rho(\mu_\alpha) \in [0,1]$ for $G^{B+p}_j$. In particular, it follows that there exist distinct parameter values $\alpha_1,...,\alpha_{2^{j+1}} \in [0,\frac{2}{3}]$ that induce demand rates $\rho(\mu_{\alpha_1}),..., \rho(\mu_{\alpha_{2^{j+1}}})$, which again give rise to supports $A^1,...,A^{2^{j+1}}$. %Note that the 
    Thus, there are $2^{j+1}$ different supports.
\end{proof}

\section{Illustration for \Cref{{sec:compStudy}}}
\label{app:illu_sf}
  
  \input{SF.tex}

\end{document}